\newtheorem{theorem}{Theorem}
\newtheorem{proposition}[theorem]{Proposition}
\theoremstyle{definition}
\theoremstyle{remark}
\numberwithin{theorem}{section}
\numberwithin{proposition}{section}
\numberwithin{lemma}{section}
\numberwithin{corollary}{section}
\numberwithin{definition}{section}
\numberwithin{remark}{section}
\newcommand{\E}{\mathbb{E}}
\newcommand{\be}{\begin{equation}}
\newcommand{\en}{\end{equation}}
\newcommand{\ben}{\begin{equation*}}
\newcommand{\enn}{\end{equation*}}
\newcommand{\bea}{\begin{eqnarray}}
\newcommand{\ena}{\end{eqnarray}}
\newcommand{\pushright}[1]{\ifmeasuring@#1\else\omit\hfill$\displaystyle#1$\fi\ignorespaces}
\begin{document}
	
	\newlength\tindent
	\setlength{\tindent}{\parindent}
	\setlength{\parindent}{0pt}
	\renewcommand{\indent}{\hspace*{\tindent}}
	
	\begin{savenotes}
		\title{
			\bf{ 
				A fast Monte Carlo scheme for \\ additive processes and option pricing
		}}
		\author{
			Michele Azzone$^{\ddagger\P}$ \& 
			Roberto Baviera$^{\ddagger}$ 
		}
		
		\maketitle
		
		\vspace*{0.11truein}
		%\vspace{0.5cm}
		\begin{tabular}{ll}
			$(\ddagger)$ &  Politecnico di Milano, Department of Mathematics, 32 p.zza L. da Vinci, Milano \\
			$(\P)$ & European Central Bank\footnote{The views expressed are those of the author and do not necessarily reflect the views of ECB.}, 20 Sonnemannstraße, Frankfurt am Main, Germany\\
		\end{tabular}
	\end{savenotes}
	
	\vspace*{0.11truein}
	
	\begin{abstract}
		\noindent 
		In this paper, we present a very fast Monte Carlo scheme for additive processes: 	the computational time is of the same order of magnitude of standard algorithms for simulating Brownian motions.
		We analyze in detail numerical error sources and propose a technique that reduces the two major sources of error.
		We also compare our results with a benchmark method: the jump simulation with Gaussian approximation.
		
		\noindent
		We show an application to additive normal tempered stable processes,  a class of additive processes that calibrates ``exactly" the implied volatility surface.
		Numerical results are relevant. 
		This fast algorithm is also an accurate tool for pricing path-dependent discretely-monitoring options with errors of one basis point or below. 
	
	\end{abstract}
	
	\vspace*{0.11truein}
	{\bf Keywords}: 
	 Additive process, Simulation, Fast Fourier transform, Lewis formula.
	\vspace*{0.11truein}
	
	{\bf JEL Classification}: 
	C51, %Model construction
	C63, %Computational Techniques • Simulation Modeling
	G12, %Asset Pricing • Trading Volume • Bond Interest Rates
	G13. %Contingent Pricing, Futures Pricing
	\vspace{4cm}
	\begin{multicols}{2}
			{\bf Address for correspondence:}\\
		{\bf Prof. Roberto Baviera}\\
		Department of Mathematics \\
		Politecnico di Milano\\
		32 p.zza Leonardo da Vinci \\ 
		I-20133 Milano, Italy \\
		Tel. +39-02-2399 4575\\
		roberto.baviera@polimi.it
		\columnbreak
	
		$ $\\
		{\bf Dr. Michele Azzone}\\
		European Central Bank\\
		Sonnemannstraße 20 \\ 
		D-60314 Frankfurt am Main, Germany \\
		Tel. +39-338-2464 527 \\
		michele.azzone@ecb.europa.eu\\
	\end{multicols}

	\newpage
	
	\begin{center}
		\Large\bfseries 
		A fast Monte Carlo scheme for \\ additive processes and option pricing
	\end{center}
	
	\vspace*{0.21truein}
	
	\section{Introduction}
	
	% Relevance of Additive models: differences with Levy
	In this paper, we introduce a fast Monte Carlo simulation technique for additive processes. 
	In option pricing, Monte Carlo methods are attractive because they do not require significant modifications when the payoff structure of the derivative
	%contingent claim 
	changes.  We describe an efficient and accurate algorithm for Monte Carlo simulations of the process increments
	and we compute the prices of a class of discretely-monitoring path-dependent options. A process $\{X(t)\}_{t\ge0}$ is said to be an additive process, 
	if it presents independent (but not-stationary) increments and satisfies $X(0) = 0$ a.s.; stationarity is the main difference with L\'evy processes \citep[see e.g.,][]{Sato1999}. 
	
	\smallskip
	
	Additive processes are becoming the new frontier in equity derivatives 
	for their ability, 
	on the one hand, to reproduce accurately market data, and
	on the other hand,  to keep the process rather elementary \citep[see e.g.,][]{madan2020additive,carr2021additive,azzone2019additive}.
	In this paper, we show another advantage of additive processes: simulation schemes are as fast as standard (fast) algorithms for simulating the Black-Scholes model.

	%Both techniques leverage on the independence of increments.\\
	
	\smallskip
	
	%Letteratura simulazione Additive & Gaussian approximation
	Up to our knowledge, the unique Monte Carlo (MC) scheme developed for a specific class of additive processes, Sato processes, 
	has been introduced by \citet{EberleinMadan2009}.
	They generalize to this class of  additive processes, a well-known jump simulation technique developed for L\'evy  processes, that can be found in many excellent
	textbooks   \citep[see e.g.,][]{Cont, Asmussen2007book}. 
	It entails truncating small jumps below a certain threshold and then simulating the finite number of independent jumps;  
	finally, the \cite{Asmussen2001approximations} Gaussian approximation (hereinafter GA) can be applied to substitute small jumps with a diffusive term:
	this has become a benchmark technique to compare numerical results.
	
	\smallskip
	
	%Letteratura FFT montecarlo
	In this paper, we propose a new MC technique for additive processes based on a numerical inversion of the cumulative distribution function (CDF).
	Monte Carlo simulation of additive processes is not straightforward because, in general, the CDF of process increments is not known explicitly.  However,
	analytic expressions exist for the characteristic functions thanks to the celebrated L\'evy--Khintchine formula \citep{Sato1999}.
	Since the seminal paper of  \citet{bohman1970method}, %,davies1973numerical}, 
	general methods have been developed for sampling from Fourier transforms and
	even some specific methods for some distributions (e.g. stable distributions) that do not require numerical inversion \citep[][Sec.1.7, p.41]{Samorodnitsky1994}.
	
	In the financial literature,  these techniques have  been developed specifically in the L\'evy case, where it is possible to leverage on the stationary  increments
	%build upon  
	\citep[see e.g.,][]{glasserman2010sensitivity,chen2012simulating,BallottaKyriakou2014}. 
	These techniques are reliable and efficient: 
	they build upon the characteristic function numerical inversion to obtain an estimation of the CDF. 
	Specifically, we use the fast Fourier transform (FFT) method for the numerical inversion as 
	proposed by \citet{Lee2004option} and then applied to MC option pricing
	in the studies of \citet{chen2012simulating} and \citet{BallottaKyriakou2014}. Unfortunately, it is not trivial to extend these numerical methods to additive processes.
	Relative to this literature, our contribution lies in \textit{i)} extending to non-stationary processes these techniques and \textit{ii)}
	analyzing the three sources of error that arise in estimating derivative price expectations and 
	showing how to improve the two largest ones.

	\smallskip
	
	Three are the main contributions of this paper. 
	First, we propose a new
	%new  
	Monte Carlo simulation technique for additive processes based on FFT. 
	Second, we improve the two main sources of numerical error in existing techniques to accelerate convergence, 
	using both a property of the Lewis formula in the complex plane and a spline method for CDF numerical inversion. 
	Finally, we point out that the proposed technique is accurate and fast: $i$) we compare it with traditional GA simulations showing that it is at least one and a half orders of magnitude faster 
	whatever time horizon we consider and $ii$) we observe that, when pricing some  discretely-monitoring path-dependent options, the computing
	time has the same order of magnitude as standard algorithms for Brownian motions.
	
	\smallskip
	
	The rest of the paper is organized as follows.
	In section \ref{sec:Overview}, we overview the method and recall both \citet{Lewis2001} formula for CDF and the error source in the numerical approximation:
	we discuss the optimal selection of the integration path. 
	In section \ref{sec:SimulationMethod}, we describe the proposed simulation method and present the other main error source in MC option pricing: 
	the interpolation method in numerical inversion. We also discuss how to generalize the GA method for additives in an efficient way.
	Section \ref{sec:NumericalResults} presents numerical results for a large class of pure-jump additive processes in the case of 
	both European options (where analytic pricing methods are available), and some discretely-monitoring path-dependent options.
	Section \ref{sec:Conclusion} concludes. In a dedicated section, we report the notation and abbreviations used in the paper.
The main characteristics of the additive process that we use for the numerical analysis can be found in appendix \ref{appendix_ATS}, a brief description of the algorithm in appendix \ref{appendix:simulation_algo} and a comparison of simulated option prices with and without spline interpolation in appendix \ref{appendix:table}.

	\section{Overview of the MC method for additive process}
	\label{sec:Overview}
	
	Pure jump asset pricing models based on additive processes have 
	enjoyed remarkable popularity in recent years.
	At least for two main reasons. First, they allow a highly tractable closed-form approach with simple analytic expression for European options following \cite{Lewis2001}.
	This formula is computable as fast as the standard Black-Scholes one.
	Second, additive processes provide an adequate calibration to the implied volatility surface of equity derivatives, 
	as well as they reproduce {\it stylized facts} as the time scaling of skew in volatility smile \citep[see e.g.,][]{azzone2022short}.
	
	In this section, we describe a third reason in favor of these models: 
	they allow a simple, accurate, and fast numerical scheme for path-dependent option valuation.  
	We extend to additive processes the preceding literature on L\'evy processes' simulation techniques and we discover that, thanks to this Monte Carlo scheme, it is possible to price efficiently exotic derivatives as Asian contracts or barrier options with discretely-monitored barriers, because we can focus only on monitoring dates.
	
	\smallskip
	
	The simulation of a discrete sample path of an additive process reduces to simulating from the distribution of the process increment between time $s$ and time $t>s$.
	L\'evy process simulation is based on time-homogeneity of the jump process: the characteristic function of an increment is the same as the characteristic function of the process itself at time $t=1$, re-scaled by the time interval $(t-s)$ of interest.
	%In this section we present the key steps in the method we propose for additive processes, explaining how we sample. 
	
	\smallskip
	
	In this paper, we extend the preceding analysis to L\'evy processes by  $i$) presenting an explicit method for additive processes  from their characteristic function and
	$ii$) analyzing the explicit bound for the total estimation bias.
	In the L\'evy case,
	thanks to process time-homogeneity,
	the properties of the process characteristic function are immediately extended to its increments.
	For example, the characteristic function (also of increments) 
	is analytic in a horizontal strip and the purely imaginary points on the boundary of the strip of
	regularity are singular points  %delimited on the imaginary axis by $p^->0$ and $-(p^+ +1) < -1$ 
	\citep[cf.][th.3.1, p.12]{Lukacs1972}.
	This identification of process characteristic function and increments' characteristic function is not anymore valid for additive processes.
	However, the present paper shows that the analyticity strip depends on time and that it is possible to build an efficient numerical scheme for additive processes. Let us point out that  it is not trivial to extend to the non-stationary case of additive processes other advanced methods developed for simulating L\'evy processes \citep[see e.g.,][]{kuzntsov2011wiener,ferreiro2015applying,boyarchenko2019sinh,kudryavtsev2019approximate} or pricing path-dependent derivatives \citep[see e.g.,][]{jackson2008fourier,phelan2019hilbert}.
	
	\smallskip
	
	Our method is based on three key observations. 
	First, computing a CDF $P(x)$ corresponds to pricing a digital option: this can be done efficiently in the Fourier space. This step can be crucial, as already highlighted by \citet{BallottaKyriakou2014} in the L\'evy case, 
	the standard Fourier formula with Hilbert transform presents some numerical instabilities due to the presence of a pole in the origin. 
	They propose a regularization that leads to an additional numerical error.
	We propose a different approach that is based on the  \citet{Lewis2001}
	%\citet{lee2001implied} 
	formula which presents two significant advantages. 
	On the one hand, this technique is exact (thus, no numerical error is associated with it), and, on the other hand, it allows selecting the optimal integration path that reduces the numerical error in the discretization of the CDF.
	
	Second,  the \citet{Lewis2001} formula for the CDF can be viewed as an inverse Fourier transform method that
	can be approximated with a fast Fourier transform (FFT) technique: Lewis-FFT computes multiple values
	of the CDF simultaneously in a very efficient way.
	
	Finally, knowing the CDF approximation $\hat{P}$, we can sample from this
	distribution by  inverting the CDF, i.e. by
	setting $X =\hat{P}^{-1} (U)$, with $U$ a uniform r.v. in $[0,1]$.
	Thus, simulating a r.v. via a numerical CDF (i.e. coupling the discrete Fourier
	transform with a Monte Carlo simulation), requires a numerical inversion that is realized via an interpolation method.
	Following \citet{glasserman2010sensitivity},  due to its simplicity, a linear interpolation of the CDF is chosen in the existing financial literature \citep[see e.g.,][]{chen2012simulating, FengLin2013}.
	We propose the spline as interpolation rule because the computational cost is  similar, 
	while the bias associated with the  two interpolation rules is significantly different:
	the upper bound of the bias can be estimated  for a given grid step $\gamma$, 
	and, as we discuss in section \ref{sec:SimulationMethod}, it should be at least $\gamma^2$ smaller for the spline interpolation.
	In extensive numerical experiments we observe that, on the one hand, the error decreases even faster as a power of $\gamma$ than predicted by the upper bound, 
	thanks to the additional properties of the interpolated functions, 
	and on the other hand, it becomes negligible for the grids that are selected in practice.
	
	Due to these three main ingredients (Lewis formula, FFT and Spline interpolation) that play a crucial role in the proposed Monte Carlo simulation technique, 
	we call the method Lewis-FFT-S. The algorithm is reported in appendix \ref{appendix:simulation_algo}. %Consistently we refer to the method that relies on linear interpolation as Lewis-FFT-L.
	
	\smallskip
	
	The Lewis-FFT-S method extends the \citet{EberleinMadan2009} technique to any additive process of financial interest, being significantly faster: 
	we show that the proposed Monte Carlo  is much faster than any jump-simulation method even considering the
	\citet{Asmussen2001approximations}
	Gaussian approximation.
	Analyzing in detail the numerical errors related to the methodology,
	we design an algorithm that increases both accuracy and computational efficiency.
	To the best of our knowledge, the proposed scheme is the first application in financial engineering of the MC simulation based on Lewis formula and FFT, when the underlying is governed by an additive process.
	
	\smallskip
	
	In the next subsection, we also recall explicit and computable expressions for the error estimates.

	\subsection{Lewis CDF via FFT}

	The proposed MC method simulates from the characteristic function of the additive increments.
	Due to the L\'evy-Khintchine formula, the characteristic function 
	\[
	\phi_t (u) := \E \, e^{i \; u \; f_t }
	\]
	of an additive process $f_t$  admits a closed-form expression. 
	Furthermore, as already mentioned, 
	according to \citet[][th.3.1, p.12]{Lukacs1972}, the process characteristic function 
	is analytic in a horizontal strip of the complex plane. 
	Similarly to \citet{Lee2004option},
	we define $p^-_t\geq0$ and $-(p^+_t +1) \leq 0$,  s.t. the characteristic function $\phi_t$
	is analytic  when $ \Im (u) \in (-(p^+_t + 1), p^-_t)$. 
	%\citep[cf.][th.3.1, p.12]{Lukacs1972}
	
	We observe that for Levy processes, the increment $f_{t} - f_{s}$ has the same distribution as $f_{\Delta}$, where 
	$\Delta=t-s$: the same property does not hold for additive processes, due to the time inhomogeneity.
	For an additive process, the characteristic function of an increment $f_t - f_s$ between times $s$ and $t>s$ is 
	\[
	\phi_{s,t}  (u) = \E \, e^{i \; u \; (f_t - f_s) } = \displaystyle \dfrac{ \E \, e^{i \; u \; f_t }}{  \E \, e^{i \; u \; f_s }} \,  ,
	\]
	due to the independent increment property of additive processes.

	Moreover, for all additive processes, a relevant property holds on the analytic strip of the characteristic function $	\phi_t $.
	\bigskip
	
	\begin{theorem}\label{prop:assump_1}
		$p_t^+$ and $p_t^-$ are non increasing for all additive processes.
	\end{theorem}
	\begin{proof}
		From theorem 9.8 of \citet[][p.52]{ Sato1999}, we have that for any additive process the L\'evy measure $\nu_t(x)$ is a positive and non decreasing function of $t$ for any $x$. 
	Thanks to the L\'evy Khintchine  representation the characteristic exponent of an additive process, given its triplet $(\gamma_t,\,A_t,\,\nu_t)$, is \citep[see e.g.,][th.8.1 p.37]{Sato1999}
	\begin{equation}
		\log\phi_t=iu\gamma_t-u^2A_t+\int_\mathbb{R} dx\,(e^{iux}-1-I_{|x|<1}i\,u\,x)\nu_t(x)\;\;, \label{eq:char_jump}
	\end{equation}
	where $\gamma_t$ is the drift term and $A_t$ the diffusion term.	\citet[][th.3.1, p.12]{Lukacs1972} has proven that the characteristic function	is analytical in an horizontal strip that includes the origin and is delimited by two points (if the strip is not the whole plane) on the imaginary axis. Hence,  we evaluate the characteristic function in $u=-i\,a$, with $a\in \mathbb{R}$, and identify $p_t^+$ and $p_t^-$ as the extrema of the interval of $a$ s.t. (\ref{eq:char_jump}) is well defined, i.e. $a\in (-p_t^-,p_t^++1)$.  The integral $\int_\mathbb{R}dx\, (e^{ax}-1-I_{|x|<1}\,a\,x)\nu_t(x)$ is the unique term that can diverge in (\ref{eq:char_jump}).\\  First, we recall that $\nu_t(x)$ is bounded for $x\neq 0$ and $\int_{\mathbb{R}}dx\,\min(|x|^2,1)\nu_t(x)<\infty$ \citep[cf.][th.8.1 p.37]{Sato1999}. Then, for any $Q>1$  the quantities $i)$ $\int_{-Q}^Qdx\, (e^{ax}-1-I_{|x|<1}\,a\,x)\nu_t(x)$,  $ii)$ $\int_{-\infty}^{-Q} dx\,\nu_t(x)$ and $iii)$ $\int_{Q}^{\infty}dx\, \nu_t(x)$ are finite. 
	Thus, we can recognize $p_t^+$ and $p_t^-$ from the set of $a$ for which  $ \int_{Q}^\infty dx\, e^{ax}\nu_t(x)$ and $ \int_{-\infty}^{-Q} dx\, e^{ax}\nu_t(x)$  converge. 
	\smallskip\\
	Let us first prove the proposition for $p_t^+$. Notice that $p_t^+$ is unique because $\int_{Q}^\infty dx\, e^{ax}\nu_t(x)$ is non decreasing in $a$ and that $p_t^+\geq -1$ because the origin is included in the analytical strip. Fix $t>0$, there are three possible cases
	\begin{enumerate}
		\item if $ \int_{Q}^\infty dx\, e^{ax}\nu_t(x)=\infty$ for any $a>0$, then $p_t^+=-1$;
		\item if $ \int_{Q}^\infty dx\,e^{ax}\nu_t(x)<\infty$ for any $a>0$, then $p_t^+=\infty$;
		\item if it exists $\lambda_t^+$ s.t. $ \int_{Q}^\infty\, e^{ax}\nu_t(x)dx<\infty$ for any $0<a<\lambda_t^+$ and $ \int_{Q}^\infty dx\, e^{ax}\nu_t(x)=\infty$ for any $a>\lambda_t^+$, then $p_t^+=\lambda_t^+-1$.
		%\item  $ \int_{-\infty}^{-Q} (e^{ax}-1)\nu_t(x)dx$ is finite if $a>-\lambda_t^-$ and $\infty$ if $a>-\lambda_t^-$.
	\end{enumerate}
	For any $s<t$,  we observe that $\int_{Q}^\infty dx\, e^{ax}\nu_s(x) \leq \int_{Q}^\infty dx \,e^{ax}\nu_t(x)$, thanks to the monotonicity of $\nu_t(x)$ in $t$: let us consider the implications on the monotonicity of $p_t^+$ in the three cases.\\ In case 1, $p_t^+\leq p_s^+$ because $p_s^+\geq -1$ as emphasized above. In case 2, $ \int_{Q}^\infty dx\, e^{ax}\nu_s(x)\leq \int_{Q}^\infty dx\, e^{ax}\nu_t(x)<\infty$ and then $p_s^+=\infty$. In case 3, also $ \int_{Q}^\infty dx \,e^{ax}\nu_s(x)<\infty$ for any $0<a<\lambda_t^+$ and then $\lambda_t^+\leq \lambda_s^+$, i.e. $p_t^+\leq p_s^+$. This proves the proposition for $p_t^+$.\\ By repeating the same considerations for the integral $\int_{-\infty}^{-Q}dx\, e^{ax}\nu_t(x)$ we can show that also $p_t^-$ is non increasing in $t$ 
	\end{proof}

	\bigskip

	Thanks to the monotonicity of $p^+_t$ and $p^-_t$, we can easily identify the strip of regularity for any increment  $f_t - f_s$:
	its  characteristic function $\phi_{s,t} $ is analytic  when $ \Im (u) \in (-(p^+_t + 1), p^-_t)$ for any $s\in[0,t)$. \\
	\citet{Lewis2001} obtains the CDF, 
	%in presence of a dumping constant $a$  
	shifting the integration path within the characteristic function horizontal analyticity strip. 
	The shift is $- i \; a$ with  $a$ a real constant s.t. $a \in (-p^-_t, p^+_t +1)$.
	Lewis deduces this formula using the properties of contour integrals in the complex plane.
	
	The CDF $P(x)$ of an additive process increment is \citep[see e.g.,][th.5.1]{Lee2004option}
	\begin{equation}
	P(x) = R_a - \frac{e^{-a x}}{\pi} \int_0^\infty du\, Re\left[\frac{e^{-iux}\phi_{s,t}(u-ia)}{i\,u+a}\right]\;\; , % {\rm if} \;  a \in (p_t^-,  p^+_t + 1)
	\label{eq:Lewis CDF}
	\end{equation}
	where  \begin{equation*}
	R_a=\begin{cases}
	1\quad &\quad 0<a<p_t^++1\\
	\frac{1}{2} \quad &\quad a=0\\
	0\quad &\quad - p_t^-<a<0
	\end{cases} \;\;.
	\end{equation*}

	The case with no shift ($a=0$)	is the Hilbert transform: it has been considered in several studies in the financial literature on MC pricing 
	\citep[see e.g.,][]{chen2012simulating,BallottaKyriakou2014}. 
	In the Hilbert transform case, the singularity in zero in the integration should be taken into account as a  Cauchy principal value; 
	%\citep[see e.g.,][and references therein]{BallottaKyriakou2014}
	as already emphasized by \citet{BallottaKyriakou2014}, the method could be not robust  enough for applications in the financial industry:
	they have suggested a regularization technique that introduces an additional error source, while the Lewis method we consider here is exact (cf. also figure \ref{figure:error_digital} for a comparison between the CDF error with Lewis formula and the Hilbert trasform method).
	
	\bigskip
	
	In the following, we focus on $a > 0$: this is a default choice in the equity case  because $p_t^+\geq p_t^-$ is consistent with the negative equity skew \citep[see e.g.,][Section 7.4, p.26]{Lee2004option}. We derive an approximation formula and its error bounds (in sections \ref{subsection::error_CDF} and \ref{subsection::error_sources}). Similar results hold for $a<0$.
	
	We approximate the  Fourier transform with a discrete Fourier transform $ \hat{P}(x) $
	%the following finite sum. 
\begin{equation}\label{eq:P_hat}
		\hat{P}(x) := 1 - \frac{e^{-a x} }{\pi}\sum_{l=0}^{{ N }-1} h\,Re \left[ \frac{e^{-i (l+1/2)h x}\phi_{s,t}((l+1/2)h-i\,a)}{i\,(l+1/2)h+a} \right]\;\;,
\end{equation}
	where $h$ is the step size in the Fourier domain and ${ N }$ is the number of points in the grid.  \\
	
	%The Fourier inversion can be computed using a fast Fourier transform choosing $N= 2^M$ with $M \in \mathbb{N}$;
	
	To implement the MC method,	
	we need the CDF function 
	%$P(x)$ 
	for a large number of values in a regular grid with step size $\gamma$.
	An algorithm that is computationally efficient is the fast Fourier transform \citep[see][for a detailed analysis of the method in derivative pricing]{Lee2004option}:
	it  involves Toeplitz matrix-vector multiplication \citep[see e.g.,][ch.12]{NumericalRecipes1989} and relies on an additional requirement for $N$,
	whose simplest choice is $N = 2^M$ with $M\in \mathbb{N}$; 	hereinafter, we consider an $N$ within this set.	
	The main advantage of the method is that the computational complexity of the FFT is $O(N \log_2 N)$ when computing one time-increment.
	Moreover, with an FFT, it holds the relationship
	\[
	\gamma \, h = \frac{2 \pi}{N} \; \; ;
	\]
	i.e., for a given number $N$ of grid points, the step size in the Fourier domain $h$ fixes the step size 
	$\gamma$.\footnote{To avoid this constraint, one can consider the fractional fast Fourier transform \citep{Chourdakis2005} 
		instead of the standard FFT. We have verified that the additional computational cost of the former method is not justified in the CDF simulation described in this paper.}

	\subsection{CDF error sources}
	\label{subsection::error_CDF}
	
	The numerical  Fourier inversion is subject $i)$ to a discretization error, because the integrand is evaluated only at the grid points, and $ii)$ to a range error, because we approximate with a finite sum. 
	
	\bigskip
	
	{\bf Assumption.}
	$\forall\, t>s\geq 0$	there exists $B>0$, $b>0$ and $\omega>0$ such that, for sufficiently large $|u|$, the following bound for the absolute value of the characteristic function holds
	\begin{flalign*}
	&&	|\phi_{s,t}(u-i\,a)|<Be^{-b \, |u|^\omega} \;, \qquad \qquad\qquad\qquad \forall a\in (0,p_t^++1)
	\hspace*{\fill} 
	%\hfill 
	%\mathhfill
	&&& \clubsuit
	%\label{equation::bound_char_fun}
	\end{flalign*}
	
	\bigskip
	
	%{\bf discretization and truncation}

	Leveraging on the Assumption, we can estimate the explicit bound for the bias in terms of the step size $h$ and the number of grid points ${ N }$, as shown in the next proposition.
	The result in the next proposition improves the known bounds for numerical errors when computing the CDF \eqref{eq:Lewis CDF}, via a discrete Fourier transform, and 
	indicates an optimal integration path that minimizes this error bound.
	
	\begin{proposition}
		\label{lemma:bound_cf} 
		If the Assumption holds, then 
		\begin{enumerate}
			\item	the numerical error $|P(x) - \hat{P}(x)|$  for the CDF  is  bounded by
			\begin{align}
			{\cal E}^{CDF}_{h, M} (x) =	
			\frac{Be^{- x \, (p^+_t+1)/2 }}{\omega  \pi  } 
			\Gamma \left[0,b \left( { N }\,h \right)^\omega \right]
			%\frac{1}{\pi} \int_{{ N } h}^\infty du {B e^{-b\,u^\omega}}
			+\frac{e^{-\pi (p^+_t+1) /h}+e^{-\pi (p^+_t+1) /h- (p^+_t+1)\,x} \phi_{s,t}(-i \, (p^+_t+1))}{1-e^{-2\pi (p^+_t +1)/h}}\;\;,
			\label{eq:ErrorCDF}
			\end{align}
			where $\Gamma(z,u)$ is the upper incomplete gamma function and
			\[
			\Gamma \left[0,b \left( { N }\,h \right)^\omega \right]=O\left(({ N }\,h)^{-\omega}e^{-b \, ({ N }\,h)^{\omega}}\right) \;\; ;
			\]
			\item the (optimal) bound holds selecting the shift $a$ in (\ref{eq:Lewis CDF}) equal to $(p^+_t+1)/2$.
		\end{enumerate}
	\end{proposition}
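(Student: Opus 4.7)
The plan is to split $|P(x)-\hat P(x)|$ into a truncation (range) error and a midpoint-rule discretization error, each controlled by one of the two assumptions. The key preliminary is that the integrand $g(u):=Re\left[e^{-iux}\phi_{s,t}(u-ia)/(iu+a)\right]$ is even in $u$: the conjugation identity $\overline{\phi_{s,t}(u-ia)}=\phi_{s,t}(-u-ia)$, valid for $u,a$ real, gives $g(-u)=g(u)$. Hence $\int_0^\infty g\,du=\tfrac12\int_{\mathbb R}g\,du$ and the one-sided sum $\sum_{l=0}^{N-1}hg((l+\tfrac12)h)$ equals half of a two-sided symmetric midpoint sum on $2N$ nodes. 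Because $a>0$ forces $R_a=1$, $|P(x)-\hat P(x)|$ equals $e^{-ax}/(2\pi)$ times the absolute discrepancy between a symmetric integral over $\mathbb R$ and a symmetric midpoint sum of $2N$ nodes, which I then split as (symmetric integral vs full bilateral sum) plus (sum truncation at $|l|\ge N$).

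For the range error, Assumption~2 gives $|\phi_{s,t}(u-ia)|\le Be^{-bu^\omega}$ for large $|u|$, and $|iu+a|\ge|u|\ge Nh$ on the tail, so $|g(u)|\le Be^{-bu^\omega}/(Nh)$. The substitution $v=bu^\omega$ turns $\int_{Nh}^\infty e^{-bu^\omega}\,du$ into $(\omega b^{1/\omega})^{-1}\Gamma[1/\omega,b(Nh)^\omega]$. Multiplying by the prefactor $e^{-ax}$ and inserting $a=(p_t^++1)/2$ reproduces the first summand of $\mathcal E^{CDF}_{h,M}(x)$; the stated asymptotic $(Nh)^{-1}\Gamma[1/\omega,b(Nh)^\omega]=O((Nh)^{-\omega}e^{-b(Nh)^\omega})$ is then the classical tail expansion of the upper incomplete gamma function.

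For the discretization error I apply the midpoint-rule Poisson identity $h\sum_{l\in\mathbb Z}g((l+\tfrac12)h)-\int_{\mathbb R}g\,du=\sum_{k\ne 0}(-1)^k\tilde g(2\pi k/h)$ and bound each $|\tilde g(2\pi k/h)|$ by contour shifting inside the analyticity strip provided by Assumption~1. For $k\ge 1$ the shift downward by $p_t^++1-a$ reaches the boundary singularity of $\phi_{s,t}(\cdot-ia)$ and yields a factor $e^{-2\pi k(p_t^++1-a)/h}$, accompanied by the boundary value $\phi_{s,t}(-i(p_t^++1))$ and an extra $e^{-(p_t^++1-a)x}$ from the rotated oscillatory exponential. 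For $k\le -1$ one shifts upward, stopping just below the simple pole of $1/(iu+a)$ at $u=ia$, producing $e^{-2\pi|k|a/h}$. Summing the two geometric series in $|k|$ and recombining with the prefactor $e^{-ax}$, so that $e^{-ax}\cdot e^{-(p_t^++1-a)x}=e^{-(p_t^++1)x}$, gives exactly the second summand of $\mathcal E^{CDF}_{h,M}(x)$; at the symmetric choice $a=(p_t^++1)/2$ the two series share the common ratio $e^{-2\pi(p_t^++1)/h}$, which is what produces the denominator in (\ref{eq:ErrorCDF}).

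For optimality, the two exponents $2\pi(p_t^++1-a)/h$ and $2\pi a/h$ sum to the constant $2\pi(p_t^++1)/h$, so the maximum of $e^{-2\pi(p_t^++1-a)/h}+e^{-2\pi a/h}$ is minimized precisely when both equal $e^{-\pi(p_t^++1)/h}$, i.e. at $a=(p_t^++1)/2$; a single differentiation confirms this is the unique minimum on $(0,p_t^++1)$. The main obstacle I expect is the rigorous bookkeeping of the upward contour shift: the pole of $1/(iu+a)$ at $u=ia$ sits inside the strip we want to cross, so one has to deform around it and check that the ensuing residue (proportional to $e^{ax}\phi_{s,t}(0)=e^{ax}$) cancels against the outer $e^{-ax}$ and does not leak into the bound, while at the same time justifying the uniform absolute convergence needed to apply Poisson summation termwise.
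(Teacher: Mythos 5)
Your proposal follows essentially the same route as the paper: the error is split into a range/tail term, bounded via Assumption 2 and the substitution $v=b\,u^{\omega}$ into the upper incomplete gamma function, and a discretization/aliasing term whose two exponents $2\pi a/h$ and $2\pi(p_t^++1-a)/h$ are balanced by the choice $a=(p_t^++1)/2$. The only difference is that where you re-derive the aliasing bound from scratch via Poisson summation and contour shifting (correctly flagging the pole at $u=ia$ as the delicate point), the paper simply imports it as Theorem~6.2 of Lee (2004) and then optimizes over $a$ and the shift level $p$ exactly as you do.
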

	\begin{proof}
	We bound the range and the discretization error separately.

\medskip

First, we bound the CDF range error, i.e. the error we introduce considering the integral (\ref{eq:Lewis CDF}) in the range $(0,Nh)$.
% We follow the approach of \citet{BallottaKyriakou2014}.
Fix $h$, it exists $N\in \mathbb{N}$ s.t. 
\begin{align*}
	&\left|P(x)-\left(1-\frac{e^{-a x} }{\pi} \int_{0}^{{ N }\,h}du\; Re \left[ e^{-iux}\frac{\phi_{s,t}(u-ia)}{iu+a} \right]\right)\right| \nonumber
	\\&< \frac{B \,e^{-a x}}{\pi} \int_{{ N }\,h}^{\infty}du\; \frac{\, e^{-b\; u^{\omega}}}{u} 
	=
	\frac{B\,e^{-a x} }{\omega\pi } 
	\Gamma \left[0,b \left( { N }\,h \right)^\omega \right] =O\left(({ N }\,h)^{-\omega}e^{-b \, ({ N }\,h)^{\omega}}\right)\;\;.
\end{align*}

The first inequality is due to $|iu+a|>u$ for $a>0$
and to the fact that $|\phi_{s,t}(u-ia)|\leq Be^{-b\; u^{\omega}}$ for sufficiently large values of $u$, thanks to the Assumption. Notice that in the range error the order of the exponential decay does not depend on $a$. Below, we prove that the choice of $a$ determines the exponential decay of the discretization error: thus, its choice is crucial to get the optimal error bound.

\medskip

Second, we bound the CDF discretization error.\\
By theorem 6.2 of \citet{Lee2004option}, we have that for any $a, \,p$ s.t. $0<a<p<p_t^++1$
\begin{align*}
	&\left|\frac{e^{-a x} }{\pi} \int_{0}^\infty du\;e^{-iux}\frac{\phi_{s,t}(u-ia)}{iu+a}-\frac{e^{-a x} }{\pi}\sum_{l=0}^{{ N }-1} h\,Re \left[ \frac{e^{-i (l+1/2)h x}\phi_{s,t}((l+1/2)h-i\,a)}{i\,(l+1/2)h+a} \right] \right|\nonumber\\ &\leq  \frac{e^{-2\pi a /h}}{1-e^{-4\pi a /h}}+\frac{e^{-2\pi(p-a) /h-p\,x}}{1-e^{-4\pi(p-a) /h}} \phi_{s,t}(-ip)  \;\; ,
\end{align*}
where $\phi_{s,t}(-i \, p)$ is well defined because $0<p<p^+_t+1$. 

We select $a$ and $p$ to minimize the discretization error. Notice that, for a sufficiently small $h$, the leading terms in the bound on the discretization error are $e^{-2\pi a /h}$ and $e^{-2\pi(p- a) /h-p\,x}$. 
Hence, for a given $p$ the best choice of $ a$ is 
\[
\hat{a}= \frac{p}{2} \left( 1 -\frac{x}{\pi} \, h \right) \;\;.
\]
This last quantity, for a sufficiently small $h$, is close to $p/2$ for any finite $x$. 
Thus, to minimize the discretization error, we select $ a=p/2$. Then, $p$ can be chosen to its maximum value $p^+_t+1$ and the upper bound becomes
\begin{align*}
 \frac{e^{-\pi (p_t^++1) /h}+e^{-\pi (p_t^++1) /h-(p_t^++1)\,x}\phi_{s,t}(-i(p_t^++1))}{1-e^{-2\pi(p_t^++1) /h}}\;\;. 
\end{align*}
With the selection of $a=(p^+_t +1)/2$ and combining the bounds on the range and discretization errors, the thesis follows
	\end{proof}

	The first term of ${\cal E}^{CDF}_{h, M} (x)$ accounts for the range error in the numerical inversion, while the second one accounts for the discretization error.\footnote{ It is possible also to obtain an error bound even when the Assumption does not hold.
		Equation \eqref{eq:ErrorCDF} can be extended to the case where the characteristic function has an asymptotical polynomial decay $|\phi_{s,t} (u-i\,a)| \le B \, |u|^{-b}$, with $b>0$:
		in this case,  the range error decays only as a power of $u$ due to the polynomial decay of the characteristic function \citep[see e.g.,][eq.(14), p.1099]{BallottaKyriakou2014}.
		However, in practice, when pricing exotic derivatives, the exponential decay of the characteristic function is a good reason for model selection.
		%because the discretization error decays only polynomially because the range error dominates the exponentially decaying discretization error. 
	} 
	It is possible to prove, following the same steps of \textbf{proposition  \ref{lemma:bound_cf}}, that in the $a<0$ case the leading term in ${\cal E}^{CDF}_{M} (x)$ is $\exp(-\pi p^-_t/h)$.\footnote{In this case, the optimal shift is $a=-p_t^-/2$.} From this result, we can observe that it is convenient to use $a>0$ if $p_t^++1\geq p_t^-$ and $a<0$ otherwise. 
	In the financial literature, error estimations have been proposed when approximating a CDF via a discrete Fourier Transform
	\citep[see e.g.,][]{Lee2004option,chen2012simulating, BallottaKyriakou2014}. 
	The bound in {\bf proposition \ref{lemma:bound_cf}} extends these results to the Lewis-FFT case, showing how to select the optimal integration path in the Lewis formula
	\eqref{eq:Lewis CDF}
	to minimize the exponential decay of the error.
	%	Selecting the optimal path, CDF error is even better than the one proposed by \citet{chen2012simulating}
	%	deduced via the sinc expansion technique.\\
	Our approach eliminates the source of error originating from the pole in the origin \citep[see e.g.,][eq.(4), p.1097]{BallottaKyriakou2014}, improving the CDF error. 
	Moreover, selecting the optimal path, CDF error is even better than the one proposed by \citet[][th.2.1, p.14:6]{chen2012simulating} deduced via the sinc expansion technique. 
	The leading term in the discretization error in theorem 2.1 of \citet{chen2012simulating}  goes as $\max(e^{-\pi \,p_t^-/h},e^{-\pi \,(p_t^++1)/h})$, while, in our case, the error goes as the minimum of the two terms. Hence, we improve the discretization error of \citet{chen2012simulating} in all cases.\footnote{\citet{baschetti2022sinc} point out that the symmetry in the real and imaginary components of the Hilbert transform allows to compute the CDF only N/2 times when the FFT grid size is N. This observation becomes relevant for situations where computing the characteristic function is computationally demanding. However, in the case of additive processes, characteristic functions are analytic and very fast to compute.}

	\bigskip
	
	We desire to get a small approximation error increasing $N$ and decreasing $h$.
	However, let us observe that, if one takes the limit $h \to 0$ and $N \to \infty$  keeping $N h$ fixed, then the range error bound does not decrease.
	%increases unbounded.
	Thus, our interest is to select $h = h(N)$ so that the discretization and the range errors have about the same order.
	Expression \eqref{eq:ErrorCDF} allows us to determine the size $h$ and the number $N$ such that the two sources of CDF error are comparable: 
	we can impose that
	$\exp(-\pi (p^+_t+1)/h) = \exp(-b\, (N \, h)^\omega) $, i.e. we select
	\[
	h(N) = \left( \frac{\pi \, (p^+_t+1)}{b} \dfrac{1}{N^\omega} \right)^{\displaystyle {1}/({\omega +1})} \;\; .
	\] 
	We define 
	\begin{equation}\label{equation::error_cdf_M}
	{\cal E}^{CDF}_{M} (x) := {\cal E}^{CDF}_{h(2^M),M} (x)
	\end{equation}
	the error in this case. ${\cal E}^{CDF}_{M} (x)$ in (\ref{equation::error_cdf_M}) is the relevant estimation of the CDF error that we use in practice:
	with this selection of $h$, the total CDF error is 
	$O(N^{-\omega/(1+\omega)} ) \exp(-b N^{\omega/(1+\omega)}) $ and decays almost exponentially 
	as we increase $N$; moreover, the step size $\gamma =2\pi/(h{ N }) = O(N^{-1/(1+\omega)})$.
	\section{The simulation method}
	\label{sec:SimulationMethod}
	Knowing the CDF approximation $\hat{P}$ in \eqref{eq:P_hat}, we can sample from this
	distribution by  inverting  $\hat{P}$, i.e. by
	setting $X =\hat{P}^{-1} (U)$, with $U$ an uniform r.v. in $[0,1]$.
	
	%%Overview method
	From the Fourier inversion, we obtain an estimate of $\hat{P}$ on a grid of $N$ points with step $\gamma$.
	As pointed out by \citet[][sec.3, pp.1614-1615]{glasserman2010sensitivity}, an adequate inversion requires to impose that $\hat{P}$ is 
	i) increasing and 
	ii) inside the interval [0,1].
	%To ensure that $\hat{P}(x_j)$  is increasing we select i) all $x_j$ s.t. $j>N/2$ and  $\hat{P}(x_j)$ is increasing in $[x_{N/2},x_j]$ ii) all $x_j$ s.t. $j<N/2$ and   $\hat{P}(x_j)$ is increasing in $[x_j,x_{N/2}]$. 
	Thus, 
	it is convenient to work with a subset of the grid of $N$ points. 
	We truncate the CDF between $x_0<0$ and $x_K>0$, such that the two conditions hold, 
	and we consider the equally spaced grid (with step $\gamma$) $x_0<x_1<...<x_K$ with $K < { N }$. \\
	Simulating a r.v. via a numerical CDF (i.e. coupling the Fourier transform with a MC simulation), requires a numerical inversion that is realized with an interpolation method.  
	As already discussed in section \ref{sec:Overview}, differently from  the existing financial literature \citep[see e.g.,][]{glasserman2010sensitivity,chen2012simulating, FengLin2013}, the 
	%Lewis-FFT-S 
	proposed method is based on spline interpolation. 
	In the next subsection, we discuss the key idea behind this choice of the interpolation method.

	\subsection{Simulation error sources: truncation and interpolation}
	\label{subsection::error_sources}
	
	%	The FFT procedure estimates $P(x)$ in an equally spaced grid with step $\gamma=2\pi/(h{ N })$ starting from . 
	%%Choise of M to control error
	%We point out that, we cannot choose the parameter  $\gamma=2\pi/(h{ N })$ that depends just from $h$ and $M$. We recall that given $M$ 
	%we fix $h$ such that truncation and discretization error of the Fourier numerical inversion in equation (\ref{eq:Error_FFT}) are comparable. 
	%It would be possible to rewrite all errors in terms of $\gamma$ but, in our opinion, it is more informative to control $M$ that is directly linked to the computational cost of the FFT. 
	%For this reason, in section \ref{sec:NumericalResults}, we report our results in term of $M$.\\	
	%%Error Bias

	Besides numerical inversion error of the CDF, two are the error sources in the MC, when pricing a contingent claim: truncation and interpolation of the CDF.	\\
	Let us consider the expected value $\mathbb{E}V(f_t-f_s)$, with $V(x)$ a derivative contract with a pay-off  differentiable everywhere except in $n_V$ points. 
	It can be proven, similarly to \citet[][ th.4.3, p.14:11]{chen2012simulating}, that
	the pricing error\footnote{The upper bound on the bias ${\cal E}$ can be trivially extended to a payoff with a finite number $n$ of monitoring times. The most relevant case, for $n=1$, will be discussed in detail in subsection \ref{subsection::European_options}.} using the Lewis-FFT method with linear interpolation 
	%(hereinafter Lewis-FFT-L)  
	is
	\begin{align}
	\label{eq:Bias_error}	
	{\cal E} :=& \int^{\infty}_{-\infty} dx \, V(x) \, \left[p(x) - \hat{p}(x) \right]\\
	<
	& \left(|V(x_0)|+|V(x_K)|+(2K+n_V) \sup_{x\in (x_0,x_K)}|V(x)|+2\sup_{x\in (x_0,x_K)}|V'(x)|  \right) {\cal E}^{CDF}_{M}(x_0) \label{eq:Error_FFT} \\
	& + \frac{\phi_{s,t}^-}{2\pi}\left(\frac{|V(x_K)|e^{x_K p^-_t}}{|p^-_t|}+ \int_{x_K}^{\infty} dx\,V(x) e^{x\, p^-_t} \right) +
	\frac{\phi_{s,t}^+}{2\pi}\left(\frac{V(x_0)e^{x_0 (p^+_t+1)}}{p^+_t +1}+ \int_{-\infty}^{x_0} dx\,V(x) e^{x \, (p^+_t + 1)} \right)\label{eq:truncation_CDF}\\
	&+ \frac{\gamma^2}{2\pi} (x_K-x_0) 	\sup_{x\in (x_0,x_K)}|V'(x)| \int_{\mathbb{R}} | du\,u\, \phi_{s,t}(u)|\label{eq:interp_CDF} \;\;,
	\end{align}
	
	where $p(x)$ is the probability density function of $f_t-f_s$, $\hat{p}$ its estimation and
	
	\begin{equation*}
	\phi_{s,t}^- := \lim_{a\to p_t^-}\int_{\mathbb{R}}du\, |\phi_{s,t}(u-ia)| \quad \&\quad \phi_{s,t}^+:=\lim_{a\to p_t^++1}\int_{\mathbb{R}}du\, |\phi_{s,t}(u-ia)|\;\;.
	\end{equation*}
	
	%We have that ${\cal E}^{CDF}_{M}(x)$ in equation (\ref{equation::error_cdf_M}) is decreasing in $x$, hence in (\ref{eq:Error_FFT}) we can write ${\cal E}^{CDF}_{M}(x_0)$ instead of $\sup_{x\in (x_0,x_K)} {\cal E}^{CDF}_{M}(x)$.\\
	Three are the components of the bias error \eqref{eq:Bias_error} when pricing a derivative: an error related to the numerical approximation of the CDF (\ref{eq:Error_FFT}),
	a truncation error \eqref{eq:truncation_CDF} and an interpolation error  \eqref{eq:interp_CDF}. Let us consider each error source separately.\\
	
	%%CDF Error
	First, the error related to the numerical approximation of the CDF in  \eqref{eq:Error_FFT} is proportional to ${\cal E}^{CDF}_{M}(x_0)$: we have discussed in the previous section how to select the integration path and $h$ in order to minimize it.
	\smallskip\\
	%%Truncation
	Second, we can always choose $x_0$ and $x_K$ s.t. the truncation error is negligible for all practical purposes. We select these points s.t. $\hat{P}(x_0)<10^{-10},\,1-\hat{P}(x_K)>10^{-10}$ \citep[as suggested by][eq.5]{baschetti2022sinc}. We notice that the range $(x_0,x_K)$ scales with $\sqrt{t-s}$. In figure \ref{figure:PDF_norm}, as an example, we plot the one-day and one-year normalized probability density functions of the additive process used in the numerical experiments of section \ref{sec:NumericalResults}. As expected, the one-day density is significantly more concentrated around zero than the one-year density when considering a constant $x$ (on the right). Conversely, the ranges of the two densities look similar when considering the rescaled $x/\sqrt{t-s}$ on the abscissa.\footnote{In extensive numerical experiments, we have observed that when choosing $x_0=-x_K$ and $x_K$ the nearest point to $5\sqrt{t-s}$ the above condition on $\hat{P}(x_0)$ and $\hat{P}(x_k)$  is always satisfied.} Moreover, to further improve the method accuracy (in particular when $M$ is small), 
	we introduce
	 an exponential extrapolation for the CDF tail below $x_0$ and above $x_K$.

	\begin{center}
		\begin{minipage}[t]{1\textwidth}
			\centering
			{\includegraphics[width=1\textwidth]{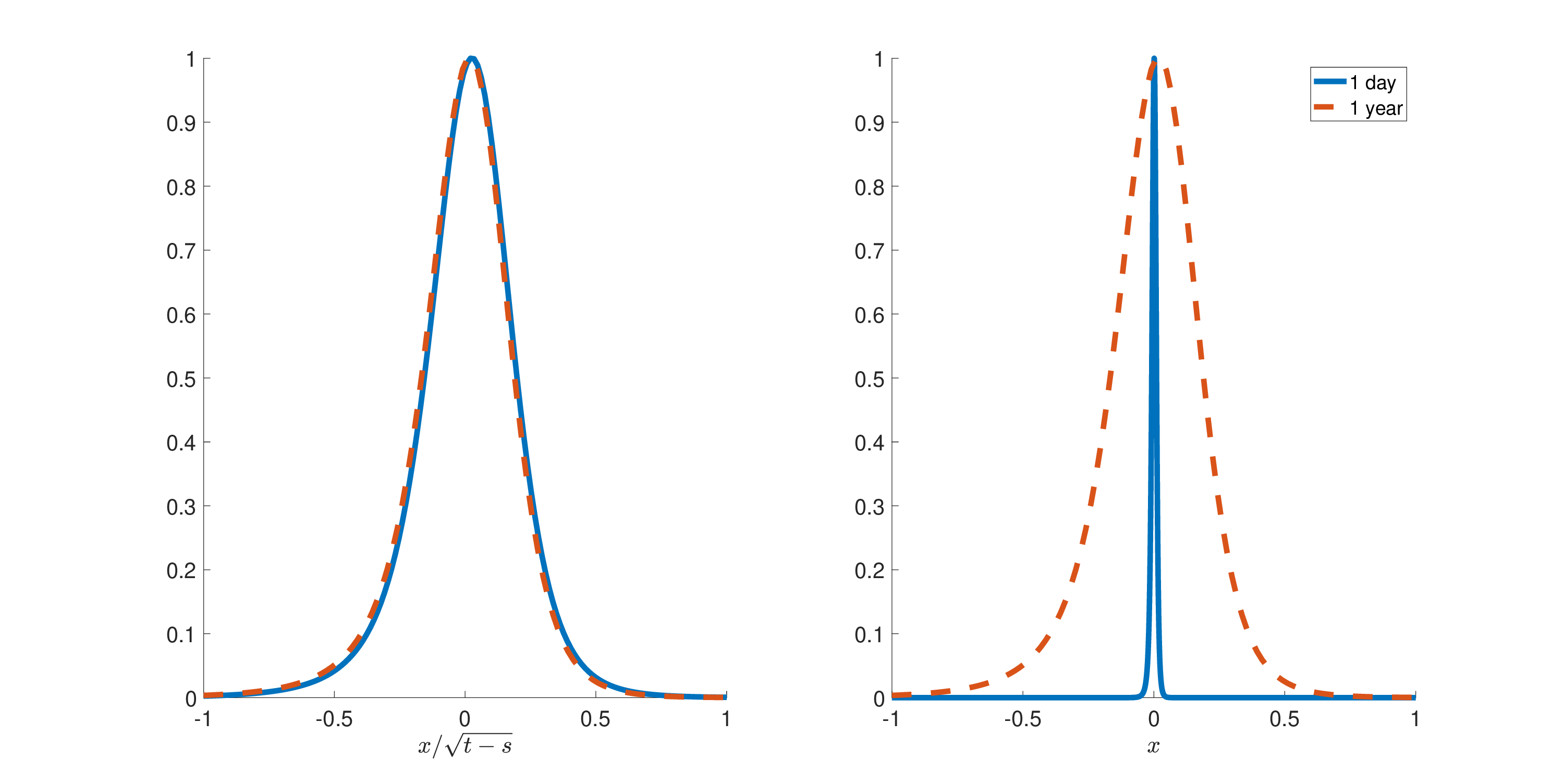} }
			\captionof{figure}{\small One-day and one-year normalized probability density functions of the additive process that we use in the numerical experiments of section \ref{sec:NumericalResults} with $s=0$. On the right, we see that, as expected, the one-day density is significantly more concentrated around zero than the one-year density. Conversely, on the left, we see that the ranges of the two densities  wrt to the rescaled $x/\sqrt{t-s}$ are similar. Notice that both probability density functions have been divided by their respective maximum for visualization purposes.
				\label{figure:PDF_norm}} 
		\end{minipage}
	\end{center}

	\smallskip	
	%%Interpolation 
	
	Finally, the bias  associated with the linear interpolation, when computing the option value, is quadratic in the grid spacing $\gamma$;
	this turns out to be the most significant source of error in most cases, as shown in the next section.
	It is well known that linear interpolation error goes as $\gamma^2$ \citep[see e.g.,][eq.(8.26), p.339]{quarteroni2010numerical}.
	For this reason,  in this paper, we propose a spline interpolation method. In this latter case, it is known that the bias goes, at least, as $\gamma^4$ as shown in \citet{hall1976optimal}.\\

	%% Interpolation cost

	As already emphasized by \citet[][Sec.3, p.1615]{glasserman2010sensitivity}, to sample $X$ from $\hat{P}(x)$ with a linear interpolation, 
	after having generated $U$, a r.v. uniformly distributed in $[0,1]$, one should 
	\begin{enumerate}
		
		\item  select the index $j$ for which $ \hat{P}(x_{ j-1}) \le  U < \hat{P}(x_{ j})$;
		\item for each $j$ determine the linear interpolation coefficients  $c_{0,j}^L$ and $c_{1,j}^L$ \begin{equation*}
		c_{0,j}^L:=\frac{ x_{ j} \, \hat{P}(x_{ j}) - x_{ j-1} \, \hat{P}(x_{ j-1})}{\hat{P}(x_{ j}) - \hat{P}(x_{ j-1})}\quad\text{and}\quad c_{1,j}^L:=\frac{\gamma} {\hat{P}(x_{ j}) - \hat{P}(x_{ j-1})}\;\;;
		\end{equation*} 
		\item compute 
		\[
		X = c_{0,j}^L+c_{1,j}^L\, U\,\,\;.
		\]
	\end{enumerate}
	Let us  discuss the computational cost of each step  when sampling ${\cal N}_{sim}$ observations. The first step relies on a nearest neighborhood algorithm with an average computational cost proportional to ${\cal N}_{sim}\times \log_2{{\cal N}_{sim}}$ \citep[see e.g.,][p.11]{cormen2001introduction}\footnote{The computational cost estimation is for the \textit{merge sort} algorithm. Since \textit{merge sort} is a recursive algorithm it could be necessary, for memory efficiency, to recur to \textit{an insertion sort} algorithm which computational cost is roughly proportional to ${\cal N}_{sim}^2$ \citep[see e.g.,][p.11]{cormen2001introduction}.}. The second step  cost is proportional to $6 K$. Finally, the last step is proportional to ${\cal N}_{sim}$.
	
	Whereas step $1$ is shared by both interpolation methods, steps $2$ and $3$ differ between spline and linear interpolations.
	In step 2, the additional computational cost of considering spline interpolation boils down to the cost of solving a $K+1$-dimensional linear system with a tridiagonal matrix to determine the spline coefficients $\{c_{q,j}^S\}_{q=0}^3$, cf. \citet[][ch.8]{quarteroni2010numerical}, i.e. the cost is 
	$8 K-7$ \citep[ch.7, p.391]{quarteroni2010numerical}. As for step $3$, the cost of computing the spline interpolation of $U$ is still proportional to ${\cal N}_{sim}$.
	It is clear that for a sufficiently large number of simulations ${\cal N}_{sim}$ and for ${\cal N}_{sim}>>K$, for both methods, the most relevant contribution in the computational cost is the one due to step $1$, the nearest neighborhood algorithm.\\

	We perform numerical experiments to compare linear and spline interpolation. We observe that, if the number of simulations is significantly above the grid dimension $K$, the spline interpolation is as expensive as the linear interpolation. 
	Moreover, in table \ref{table::Computational spline}, we compare the computational cost of linear interpolation and spline interpolation. We consider a grid of size $K=10^4$ and  ${\cal N}_{sim}=10^5$. In this case, the spline cost is just 10\% more than the linear one.   The case considered in table \ref{table::Computational spline} is a particularly unfavorable situation, when comparing spline interpolation with linear interpolation: a large grid size $K=10^4$ and a small number of simulations ${\cal N}_{sim}=10^5$. In this case steps 1, 2 and 3 computational times are comparable while, in practice, most of the computational costs are absorbed by the nearest neighborhood algorithm.
	For reasonable values of $M$ (e.g. for $M\leq 15$), the dimension of the grid $K$ is always well below $10^4$. Thus, for all values of $K$ and ${\cal N}_{sim}$ (${\cal N}_{sim}\geq 10^6$) used in practice the incremental cost between Lewis-FFT (with linear interpolation) and Lewis-FFT-S (with spline interpolation) is negligible. 
	\begin{table}
		\centering
		\begin{tabular}{|l|ccc|}
			\hline
			Algorithm& Nearest neighborhood& Linear interpolation&Spline interpolation \\
			time [ms] &1.08 &1.13 &1.27\\
			
			\hline
		\end{tabular}

		\caption{\small Computational cost in milliseconds [ms] for the nearest neighborhood (nn), 
			the linear interpolation including nn, and the spline interpolation including nn. We consider a grid size $K=10^4$ and  ${\cal N}_{sim}=10^5$ simulations. Even considering a low number of simulations and a grid size $K$ one order of magnitude above what is used in practice (in the Lewis-FFT-S case $K$ is of order $10^3$)  the spline simulation cost is just 10\% more than the linear simulation one. \label{table::Computational spline}}
	\end{table}

	\subsection{A simulation benchmark: the Gaussian approximation}
	
	In this subsection, we show how to generalize the GA method for additives in an efficient way,
	when a monotonicity property holds for the L\'evy measure 
	and then the ziggurat method \citep{marsaglia2000ziggurat} can be applied.
	
	A generic additive process may have an infinite number of jumps, most of them being small, over an arbitrary finite time horizon,
	making the simulation of such a process often nontrivial. Defining  $\nu_t$ the additive process jump measure  \citep[see e.g.,][def.8.2, p.38]{Sato1999}, the jump measure of the additive process increment $f_t-f_s$ is $\nu_t-\nu_s$.\\

	\citet{EberleinMadan2009}, in their study on simulation of additive processes,
	consider only a class of additive processes (Sato processes):
	their approach consists in discarding the small jumps that in absolute value are below a given threshold $\epsilon$.
	It is well known, in the L\'evy case, that such an approach is accurate only if there are not too many
	small jumps \citep[see e.g.,][]{Cont}. 
	Alternatively, 
	%under certain conditions ({\bf SPECIFICARE}), 
	the small
	jump component of an additive process may be approximated by a Brownian motion \citep{Asmussen2001approximations}.\\
	
	%%Jump measure
	Once the jump measure of the increment (between time $s$ and time $t>s$) is truncated, we have  
	$i$) to draw a Poisson
	number of positive and negative jumps  and
	$ii$) to simulate separately positive jumps from the probability density  $m^+_{s,t}$ and negative jumps from  the probability density $m^-_{s,t}$, where \begin{equation}
	m^+_{s,t}(x) := \mathbb{I}_{x>\epsilon}\frac{\nu_t(x)-\nu_s(x)}{\int_\epsilon^\infty dz (\nu_t(z)-\nu_s(z))} \quad \& \quad  m^-_{s,t}(x) :=\mathbb{I}_{x<-\epsilon} \frac{\nu_t(x)-\nu_s(x)}{\int_{-\infty}^{-\epsilon} dz (\nu_t(z)-\nu_s(z))} \;\;. \label{eq:m}
	\end{equation}
	
	To sample positive and negative jumps is extremely costly because often it is not possible to compute explicitly the integrals of $m^+_{s,t}$ and $m^-_{s,t}$. \\
	
	When $m^+_{s,t}(x)$ is  non increasing  in $x$ and $m^-_{s,t}(x)$ is non decreasing in $x$ $\forall s,t$ s.t. $0\leq s < t$, 
	a faster methodology -for sampling from a known distribution without inverting numerically its integral- is available: the ziggurat method of \citet{marsaglia2000ziggurat}. 
	This method is applicable to probability density functions that are bounded and monotonic. We can apply the algorithm separately to negative and positive jumps. Notice that the density functions are bounded because we have truncated the small jumps. The ziggurat method covers a probability density with $N_{ret}$  rectangles with equal area and a base strip. The base strip contains the tail of the probability density, it is built s.t. it has the same area of the rectangles. The method is composed of two building blocks: first, the rectangles with equal area are identified; second, the random variable is simulated either from a rectangle or from the base strip. Only in the latter case, an inversion of the integral is needed. $N_{ret}$ is a key parameter because it controls the trade-off, in terms of computational time, between the inversion and the construction of the rectangles.\\

	With respect to \citet{EberleinMadan2009}, to reduce the bias of the method, 
	we also consider the Gaussian approximation of \citet{Asmussen2001approximations}.

	\section{Numerical results}
	\label{sec:NumericalResults}
	
	Financial applications provide an important motivation for this study.
	We show that the proposed Monte Carlo technique for additive processes can price path-dependent options fast and accurately. The computational time is comparable to the case with simple Brownian motion dynamics.
	
	We are interested in simulating a discrete sample path of the  process over a finite time horizon:
	we are only concerned about the values of an additive process on such a discrete-time grid.
	This arises from situations where only discrete values of the process are concerned as in \citet{chen2012simulating,BallottaKyriakou2014}
	(e.g., they consider discrete barrier, lookback, and Asian options).

	\smallskip
	
	%%ATS
	The case of an additive normal tempered stable (ATS) is discussed in detail.
	ATS processes present several advantages: they calibrate accurately equity implied volatility surfaces and, in particular, they capture volatility skews \citep[see e.g.,][]{azzone2019additive}. We model the forward price at time $t$ with maturity $T$  as an exponential additive $F_t(T)=F_0(T)e^{f_t}$, where $f_t$ is the ATS process and $F_0(T)$ is the forward price at time 0. The ATS characteristic function and L\'evy measure are reported in appendix A (cf. equations (\ref{laplace}-\ref{eq:nu_t})).
	
		\smallskip
	
	The Lewis-FFT-S method can be used for the ATS because, in the next proposition, 
	we prove that the Assumption in section \ref{subsection::error_CDF} holds for this class of additive processes. Moreover, we prove that the Assumption holds for the two other classes of additive processes considered in the literature for option pricing: additive logistic processes \citep[][]{carr2021additive} and Sato processes \citep{carr2007self}. 
	\begin{proposition}
		The Assumption (cf. section \ref{subsection::error_CDF}) holds for\begin{enumerate}
			\item ATS processes with $\alpha \in (0,1)$;
			\item additive logistic processes \citep[][]{carr2021additive};
			\item Sato processes with characteristic function  $\phi_{t}(u)$, for $t=1$, that decays exponentially \citep{carr2007self}.\end{enumerate} \label{proposition::assumptions_self}
	\end{proposition}
	\begin{proof}
	We prove the thesis for the ATS.\\
We observe that, by the condition $(a)$ on $g_1(t)$ and $g_2(t)$ of theorem \ref{theorem:f_Additive}, we have 	
\[
g(t):= - (g_1(t)+ g_2(t)) = \sqrt{{\left(1/2+\eta_t\right)^2+2(1-\alpha)/( k_t\,{\sigma}^2_t)}}
\]  
is non increasing wrt $t$. Hence, thanks to the condition $(a)$ on $ g_3(t)$ of theorem \ref{theorem:f_Additive}
\be
\frac{t}{k_t^{1-\alpha}}\sigma_t^{2\alpha} \; \; \text{is increasing in } t \;\;.
\label{eq:coda}
\en

We have to show that, given $s$ and $t$, there exists $B>0$, $b>0$, and $\omega>0$ such that, for sufficiently large $|u|$, the Assumption holds for the characteristic function of ATS.

\smallskip

We choose $\log(B)>\frac{1-\alpha}{\alpha}\left( \frac{t}{k_t} -\frac{s}{k_s}\right)$, 
$0<b<\frac{(1-\alpha)^{1-\alpha}}{2^\alpha \alpha}\left(\frac{t}{k_t^{1-\alpha}}\sigma_t^{2\alpha}-\frac{s}{k_s^{1-\alpha}}\sigma_s^{2\alpha}\right)$ and $0<\omega<2\alpha$.\\ Notice that it is possible to fix $b>0$, because \eqref{eq:coda} holds. Moreover, the imaginary part of the exponent in (\ref{laplace}) does not contribute to $B$, because the absolute value of the exponential of an imaginary quantity is unitary.\\
For sufficiently large $|u|$, and for $s<t$ $|\phi_{t,s}(u-i\,a)|$ goes to zero faster than $Be^{-b\; |u|^{\omega}}$ because  $\log\phi_{t,s}(u-i\,a)$ is asymptotic to 
\[
-\frac{(1-\alpha)^{1-\alpha}}{2^\alpha \alpha}\left(\frac{t}{k_t^{1-\alpha}}\sigma_t^{2\alpha}-\frac{s}{k_s^{1-\alpha}}\sigma_s^{2\alpha}\right) \, u^{2 \, \alpha}\;\;,
\]
that is negative due to \eqref{eq:coda} for $\alpha \in (0,1)$.\\

We prove the thesis for additive logistic processes. \\
\citet{carr2021additive} consider two additive logistic processes: the CPDA and the SLA.	The characteristic function of an additive logistic process at time $t$ is 
\begin{equation*}
	\phi_t(u)=\frac{{{\cal B}}(1+i\,\sigma_tu,c_t-i\,\sigma_t u)}{{B}(1,c_t)}\;\,,
\end{equation*}
where ${\cal B}$ is the beta function and $\sigma_t>0$ is non decreasing. For  the CPDA model $c_t=1-\sigma_t$ and $\sigma_t<1$ and for the SLA model $c_t=1$ \citep[cf.][prop.4.2]{carr2021additive}. \\
For sufficiently large $|u|$,
\begin{align*}
	&{{\cal B}}(1+i\,\sigma_tu,c_t-i\,\sigma_t u)=\frac{\Gamma(1+i\,\sigma_tu)\Gamma(c_t-i\,\sigma_t u)}{\Gamma(1+c_t)}
	\\\approx& \frac{\sqrt{2\pi}}{\Gamma(1+c_t)}(1+i\,\sigma_tu)^{1+i\,\sigma_tu-1/2}(c_t-i\,\sigma_tu)^{c_t-i\,\sigma_tu-1/2}\\=&\frac{\sqrt{2\pi}e^{i\,z_t(u)-1-c_t}}{\Gamma(1+c_t)} e^{\log\left(\sqrt{1+\sigma_t^2u^2}\right)(1-1/2)-\arctan(\sigma_tu/1)\sigma_tu+\log\left(\sqrt{(c_t)^2+\sigma_t^2u^2}\right)(c_t-1/2)-\arctan(\sigma_tu/c_t)\sigma_tu}\;\;,
\end{align*}
where the asymptotic approximation follows from Stirling's formula for the Gamma function $\Gamma(\zeta)$ when $\zeta\to \infty$ and $\arg \zeta<\pi$ \citep[see e.g.][p.257]{abramowitz1948handbook}, and $z_t(u)$ is a deterministic function. From this approximation, if $t>s$, $\log \left[\phi_t(u-i\,a)/\phi_s(u-i\,a)\right]$ is asymptotic to \begin{align*}
	&-|u|\left[\sigma_t\left(\arctan\left(\frac{\sigma_t|u|}{1+\sigma_ta}\right)+\arctan\left(\frac{\sigma_t|u|}{c_t-\sigma_ta}\right)\right)-\sigma_s\left(\arctan\left(\frac{\sigma_s|u|}{1+\sigma_sa}\right)+\arctan\left(\frac{\sigma_s|u|}{c_s-\sigma_sa}\right)\right)\right]\\&\leq-|u|\left[\left(\sigma_t-\sigma_s\right)\left(\arctan\left(\frac{\sigma_t|u|}{1+\sigma_ta}\right)+\arctan\left(\frac{\sigma_t|u|}{c_t-\sigma_ta}\right)\right)\right]\leq -|u|\left(\sigma_t-\sigma_s\right)\frac{3\pi}{4} =:-|u|\hat{b}\;\;,
\end{align*}
where the first inequality holds because $\frac{\sigma_t|u|}{1+\sigma_ta}$ and $\frac{\sigma_t|u|}{c_t-\sigma_ta}$ are non decreasing in $t$ and positive and the second holds for sufficiently large $|u|$.
Moreover, $\hat{b}>0$ because $\sigma_t$  is non decreasing in $t$.
Hence, we can set $B>0$ and $0<b<\hat{b}$ s.t. \begin{equation}
	|\phi_t(u)/\phi_s(u)|\leq Be^{-|u|b}\;\;,
\end{equation}
for sufficiently large $|u|$.\\

Finally, we prove the thesis for Sato processes.\\
If $\phi_1(u)$ decays exponentially as $e^{-\hat{b}|u|^w}$, with $\hat{b}>0$, then $
\phi_t(u) =\phi_1(ut^\zeta)$ decays as  $e^{-\hat{b}|u|^w \,t^{\zeta\,w}}$. It is possible to select $B>0$ and $0<b<\hat{b}(t^{\zeta\,w}-s^{\zeta\,w})$ s.t. $|\phi_{t,s}(u-i\,a)|=|\phi_t(u-i\,a) /\phi_s(u-i\,a) |<Be^{-b|u|^w}$  for $t>s$ 
	\end{proof}
	A brief comment on Sato processes can be useful. Thanks to the self-similarity of the processes, if a condition on the characteristic function holds for $t=1$ then it is satisfied also for all other time intervals.\footnote{\citet{EberleinMadan2009} consider also some characteristic functions with polynomial decay; in this case, the considerations in note 3 hold.}

	\smallskip
	
	%%Power scaling
	In particular, for the numerical example, we focus on the power-law scaling ATS \citep[see e.g.,][p.503]{azzone2019additive} that is characterized by the parameters 	
	\[
	k_t=\bar{k} \; t^\beta, \qquad \eta_t=\bar{\eta} \; t^\delta,  \qquad\sigma_t=\bar{ \sigma}\;,
	\]
	where  $\bar{\sigma}, \bar{k}, \bar{\eta} \in \mathbb{R}^+$, and $\beta, \delta \in \mathbb{R}$. 
	This model description has been shown to be particularly accurate for equity derivatives. 
	Let us emphasize that, in the ATS case,  $p_t^+\geq p_t^-$, as shown in the next proposition, and then it is convenient to use $a>0$ (cf. section \ref{sec:Overview}).
	\begin{proposition}\label{proposition::assumptions}
		For ATS processes with $\alpha \in (0,1)$
		we have that $p_t^+\geq p_t^- $.
		
	\end{proposition}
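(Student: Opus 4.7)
The plan is to start from the explicit closed-form expression for the ATS characteristic function, which for stability parameter $\alpha \in (0,1)$ and time-dependent parameters $\sigma_t, k_t, \eta_t$ reads (up to the martingale drift correction)
\[
\ln \phi_t(u) \;=\; i u \, \varphi_t \, t \;+\; \frac{t}{k_t}\frac{1-\alpha}{\alpha}\left[1 - \left(1 + \frac{u^2\sigma_t^2 k_t}{2(1-\alpha)} - i u \eta_t \sigma_t^2 k_t\right)^\alpha\right] .
\]
The only source of non-analyticity is the branch cut of the power $(\cdot)^\alpha$, so the singular points of $\phi_t$ on the imaginary axis are exactly the two real roots $u=i p_t^-$ and $u=-i(p_t^++1)$ of the quadratic in brackets. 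Solving the quadratic gives closed-form expressions for $p_t^\pm$ as algebraic functions of $\sigma_t, k_t, \eta_t$. Since $\phi_{s,t}=\phi_t/\phi_s$, analyticity of $\phi_{s,t}$ inside $\Im u \in (-(p_t^++1), p_t^-)$ follows as soon as the strips are nested in $t$, which is precisely Assumption 1.

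To establish Assumption 1 I would differentiate the explicit algebraic expressions for $p_t^\pm$ with respect to the ATS time-scalings prescribed by the model (the power-law behaviour of $\sigma_t, k_t, \eta_t$) and verify that both maps $t\mapsto p_t^\pm$ are monotonically non-increasing. For Assumption 2, shift to $u-ia$ with $a\in(0,p_t^++1)$ and expand the bracket for large $|u|$: the leading term is $u^2\sigma_t^2 k_t/(2(1-\alpha))$, so after taking the power $\alpha$ the real part of $\ln\phi_t(u-ia)$ behaves as $-c_t|u|^{2\alpha}$ for a strictly positive constant $c_t$ depending on $\alpha,\sigma_t,k_t$. Subtracting the same expansion for $\phi_s$ yields $\Re[\ln\phi_{s,t}(u-ia)]\sim -(c_t-c_s)|u|^{2\alpha}$, which is negative because the ATS parametrization is chosen so that the cumulative variance grows in $t$. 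This gives Assumption 2 with $\omega=2\alpha$ and $b=c_t-c_s>0$, uniformly in $a$ on any compact subset of $(0,p_t^++1)$.

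For Assumption 3, I would write the explicit Lévy density of the increment $\nu_t-\nu_s$ by exploiting the fact that a normal tempered stable density is, by construction, a Gaussian mean-variance mixture against a one-sided tempered stable subordinator density: this produces an integral representation in which differentiating in $x$ on each half-line keeps the sign constant, proving monotonicity. The inequality $p_t^+\geq p_t^-$ follows directly from the closed-form roots of the quadratic by noting that $p_t^+ - p_t^-$ has the sign of $\eta_t$, which is non-negative in the equity ATS calibration. The main obstacle will be Assumption 1: the monotonicity of $p_t^\pm$ is not a direct consequence of the L\'evy--Khintchine representation but depends on the interplay between the model's power-law time scaling of $(\sigma_t,k_t,\eta_t)$ and the algebraic dependence of the roots on those parameters, so this step requires the detailed model-specific calculation and is the place where the ATS parametrization really enters.
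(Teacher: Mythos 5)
Your identification of the analyticity strip through the branch point of the power $(\cdot)^\alpha$, the asymptotic behaviour $\Re\,\ln\phi_{s,t}(u-ia)\sim -(c_t-c_s)\,|u|^{2\alpha}$ giving $\omega$ of order $2\alpha$, and the remark that $p_t^+-p_t^-$ has the sign of $\eta_t$ (in the paper, $p_t^+-p_t^-=2\eta_t\ge 0$) all match the paper's proof. The gap is in how you would justify the monotonicity statements in $t$, which are the heart of the proposition. The paper never differentiates the power-law parametrization: it invokes the sufficient conditions for existence of the ATS (theorem \ref{theorem:f_Additive}), which state that $g_1(t)$, $g_2(t)$, $g_3(t)$ are non-decreasing. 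Since the roots of the quadratic are exactly $p_t^-=-g_1(t)$ and $p_t^+=-g_2(t)-1$, Assumption 1 is immediate; and combining the condition on $g_3$ with the fact that $g=-(g_1+g_2)$ is non-increasing yields that $t\,\sigma_t^{2\alpha}/k_t^{1-\alpha}$ is increasing, which is precisely what makes your constant $c_t-c_s$ strictly positive in Assumption 2. Your appeal to ``the cumulative variance grows in $t$'' is an assertion, not a proof, and your plan to verify Assumption 1 by differentiating the specific power-law scalings both defers the hardest step and restricts the conclusion to one parametrization, whereas the statement (and the paper's argument) covers any ATS with $\alpha\in(0,1)$ satisfying the existence conditions.

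The more serious gap is Assumption 3. What must be monotone is $m^+_{s,t}\propto \nu_t-\nu_s$ on $x>\epsilon$ (and symmetrically for negative jumps), and monotonicity in $x$ of each $\nu_t$ separately --- which the mean-variance mixture picture does deliver --- does not transfer to the difference: a difference of two decreasing functions need not be decreasing. The paper's proof differentiates the explicit Bessel-integral representation \eqref{eq:nu_t} of the ATS jump measure and shows that $\partial_x\nu_t(x)$ is negative \emph{and non-increasing in $t$} for every $x>0$, using once more the monotonicity of $g_2$, $g_3$ and $g$ from theorem \ref{theorem:f_Additive}; only then does $\partial_x\bigl(\nu_t-\nu_s\bigr)\le 0$ follow. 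Your sketch asserts that differentiating the increment's density in $x$ ``keeps the sign constant'' but offers no mechanism: since $k_t,\eta_t,\sigma_t$ all vary with $t$, the difference $\nu_t-\nu_s$ is not itself the L\'evy measure of a subordinated Brownian motion with fixed parameters, so the mixture argument does not apply to it directly. As written, this step would not go through without essentially reproducing the paper's $t$-monotonicity computation.
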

	\begin{proof}
To identify $p_t^+$ and $p_t^-$, we apply the Lukacs theorem \citep[cf.][th.3.1, p.12]{Lukacs1972}.
At time $t$, the ATS characteristic function in equation (\ref{laplace})  is analytic on the imaginary axis $u=-i\,a$,  $a\in \mathbb{R}$ iff \[
1+\frac{k_t}{1-\alpha}\left(a\left(\frac{1}{2}+\eta_t \right)\sigma_t^2-\frac{a^2 \sigma_t^2}{2}\right)>0\;\;.
\]
By solving the second order inequality, we get \[
g_1(t)<a<-g_2(t)\;\;,
\]
with $	g_1(t)$ and $	g_2(t)$ defined in (\ref{eq:g_2}).
Hence, $p^+_t:=-g_2(t)-1$ and $p_t^-:=-g_1(t)$.\\ It holds that  $p^+_t\geq p_t^-$ because 
\[p^+_t-p_t^-=2\eta_t \geq 0\;\;\qedhere\eqno 
\]
	\end{proof}
	For all numerical experiments, we use the parameters reported in table \ref{table::parameters}: these parameters are consistent with the ones observed in market data. Moreover, 	for simplicity,	we consider the case with unitary underlying initial value and without 
	%neither 
	interest rates nor dividends:
	these deterministic quantities can be easily added to simulated prices without any computational effort.\footnote{We remind that, in this setting, the forward price $F_0(T)$ is equal to the spot price $S_0=1$.} 
	\begin{table}
		\centering
		\begin{tabular}{|ccccc|}
			\hline
			$\beta$&$\delta$&$\overline{k}$&$\bar{\eta}$&$\bar{ \sigma}$ \\
			\hline
			1&-1/2&1&1&0.2\\
			\hline
		\end{tabular}
		\caption{\small ATS parameters used in all numerical simulations. These selected parameters are consistent with the ones observed in market data. 
			\label{table::parameters}}
	\end{table}

	% Scelta moneyness radice di t
	To evaluate the Lewis-FFT-S performances, we consider plain vanilla and exotic derivatives at different moneyness $x=\log(S_0/\kappa)$, where $\kappa$ is the strike price, and at different times to maturity. 
	In the rest of the section, to ensure that we verify the performance of the method on options in a relevant range of moneyness $x$, we consider $x$ in the range $\sqrt{t}(-0.2,0.2)$, where $t$ is the option time to maturity; deep out-of-the-money and in-the-money options are less informative on the method performances, as the option value is close to the intrinsic value.\\
	
	%% Numerical results.
	In subsection \ref{subsection::European_options}, we show how the Lewis-FFT-S (with spline interpolation) method 
	significantly outperforms the method with linear interpolation for European options, where - thanks to the closed formula - we can easily verify the accuracy of the numerical method. 
	In subsection \ref{subsection::Computational time}, we provide evidence that Lewis-FFT-S is extremely fast and it is less computationally expensive, by at least 1.5 orders of magnitude than the GA method. In subsection \ref{subsection::Discretly monitoring}, we price discretely-monitored Asian options, lookback options, and Down-and-In options with a time to maturity of five years. We also show that the Lewis-FFT-S is particularly efficient. The computational time needed to price path-dependent options with this method is just three times the computational time needed when using standard MC techniques for a geometric Brownian motion.
	
	\subsection{European options: accuracy} \label{subsection::European_options}
	
	%%Lewis-FFT-S performances
	In the following, the Lewis-FFT-S performances are assessed for the ATS process. First, we compare the accuracy of Lewis method and Hilbert transform to compute the CDF. Second, we show that, when using linear interpolation  the leading term in $(\ref{eq:Bias_error})$ goes as $\gamma^2$. Then, we improve the bound by considering spline interpolation (Lewis-FFT-S) and we discuss the excellent performances of the method for the ATS case. Thanks to FFT the  Lewis-FFT-S is particularly fast: computational time has the same order of magnitude of standard algorithms that simulate Brownian motions. Thanks to the spline interpolation, Lewis-FFT-S is also particularly accurate, for $10^7$ simulations and for any $M>9$, the maximum observed error is 0.03 basis points (bp).\\

	%%Figures
	In figure \ref{figure:error_digital}, we compare the accuracy of the Lewis formula and the Hilbert transform method for inverting the CDF in terms of the mean absolute error (MAE) varying $M$ s.t. $N=2^M$. We consider the ATS case for the one month maturity and we invert the CDF on an interval $x_0,\,x_K$. To investigate the potential instability of the Hilbert transform due to the pole in the origin, we consider  both a small shift of $0.01\cdot h$ in the FFT grid in the Fourier space and the case of a perfectly symmetric grid.
	The Lewis method is more accurate than the Hilbert transform method both in the case of a shift in the Fourier space (on the left) and in the  symmetric case (on the right).  The plotted results clearly indicate that the Hilbert method is highly unstable and even a slight  shift in the Fourier space  can result in a significant increase of the error, up to six orders of magnitude.
		\begin{center}
		\begin{minipage}[t]{1\textwidth}
			\centering
			{\includegraphics[width=1\textwidth]{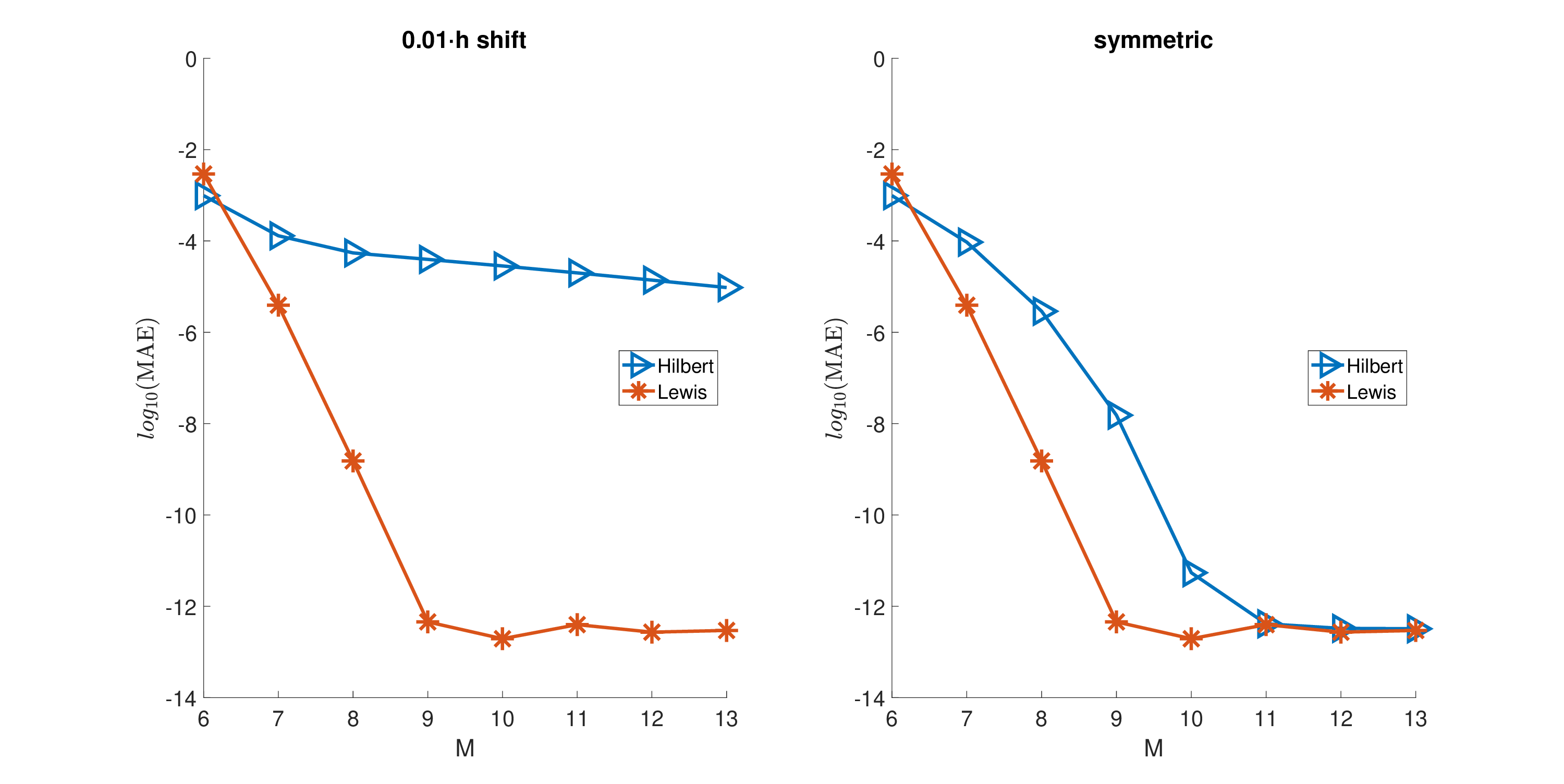} }
			\captionof{figure}{\small One-month mean absolute error (MAE) in the CDF (in log-scale) for the Lewis formula and Hilbert transform with a small shift of $0.01\cdot h$ in the Fourier space and in the of a perfectly symmetric grid. We compare the accuracy of the two methods:
				 the Lewis method is more accurate than the Hilbert method, when computing the CDF of the ATS, both in the case of a shift in the Fourier space (on the left) and in the symmetric (on the right). The plotted results indicate that the Hilbert method is highly unstable and even a small shift in the Fourier space  can result in a significant increase the error, up to six orders of magnitude. \label{figure:error_digital}} 
		\end{minipage}
	\end{center}

	%% European options and error metrics
We do not desire a method that performs well either only OTM or only ITM. We want a MC that prices accurately options with any moneyness: for this reason, we consider 30 European call options with moneyness in a regular grid with range $\sqrt{t}(-0.2,0.2)$.\\   
Monte Carlo error is often decomposed into bias and variance \citep[see e.g.,][Section 1.1.3, pp.9-18]{glasserman2004monte}.
In this paper, we aim to reduce the bias error, but it is relevant to take into account also the variance. For a large number of simulations, confidence intervals estimated via MC are directly linked to this quantity \citep[see e.g.,][ch.1, eq.(1.10), p.10]{glasserman2004monte}. In our case, since we are considering the average error over 30 call options, the bias is assessed in terms of the maximum error in absolute value (MAX) wrt the exact price, while the variance is estimated with the average over the 30 MC standard deviations ($SD$). When the maximum error is below SD we can infer that the error on bias has been dealt with correctly. In all considered cases, SD is of the order of 0.1 bp and significantly above the Lewis-FFT-S error if $M> 8$. We observe such a low SD because we are using $10^7$ trials.\\

	In figure \ref{figure:error_bound}, we plot the three terms that appear in the bias bound of equation (\ref{eq:Bias_error}) for an ATS with $\alpha=2/3$ over a one-month time interval. The bound is for Lewis-FFT simulation with linear interpolation varying the number of grid points in the FFT via $M$ s.t. $N=2^M$. 
	We plot the bounds on the error due to  
	$i)$ the truncation error (blue circles) in (\ref{eq:truncation_CDF}), 
	$ii$) the linear interpolation of the CDF (red squares) in (\ref{eq:interp_CDF}), and  
	$iii$) the error related to the CDF approximation (green triangles) in  (\ref{eq:Error_FFT}). 
	As we have already anticipated in subsection \ref{subsection::error_sources}, two are the most relevant error sources: the error originating from the CDF approximation and the one due to the interpolation. The error originating from the truncation is always negligible: at least ten orders of magnitude lower than interpolation error for every $M$. 
	For the CDF approximation error, as explained in section \ref{sec:Overview}, we have suggested an optimal selection of the shift $a$ in the Lewis-FFT approach.  
	The term that we need to tackle is the interpolation one: for $M>8$ the unique relevant bound is the one on the interpolation error that scales as $\gamma^2$ for all derivative contracts with pay-off differentiable everywhere except in a finite number of points
	(e.g. for $M=10$ the interpolation error is 10 orders of magnitude above all other errors). Similar results hold $\forall \alpha \in (0,1)$.
	\begin{center}
		\begin{minipage}[t]{1\textwidth}
			\centering
			{\includegraphics[width=1\textwidth]{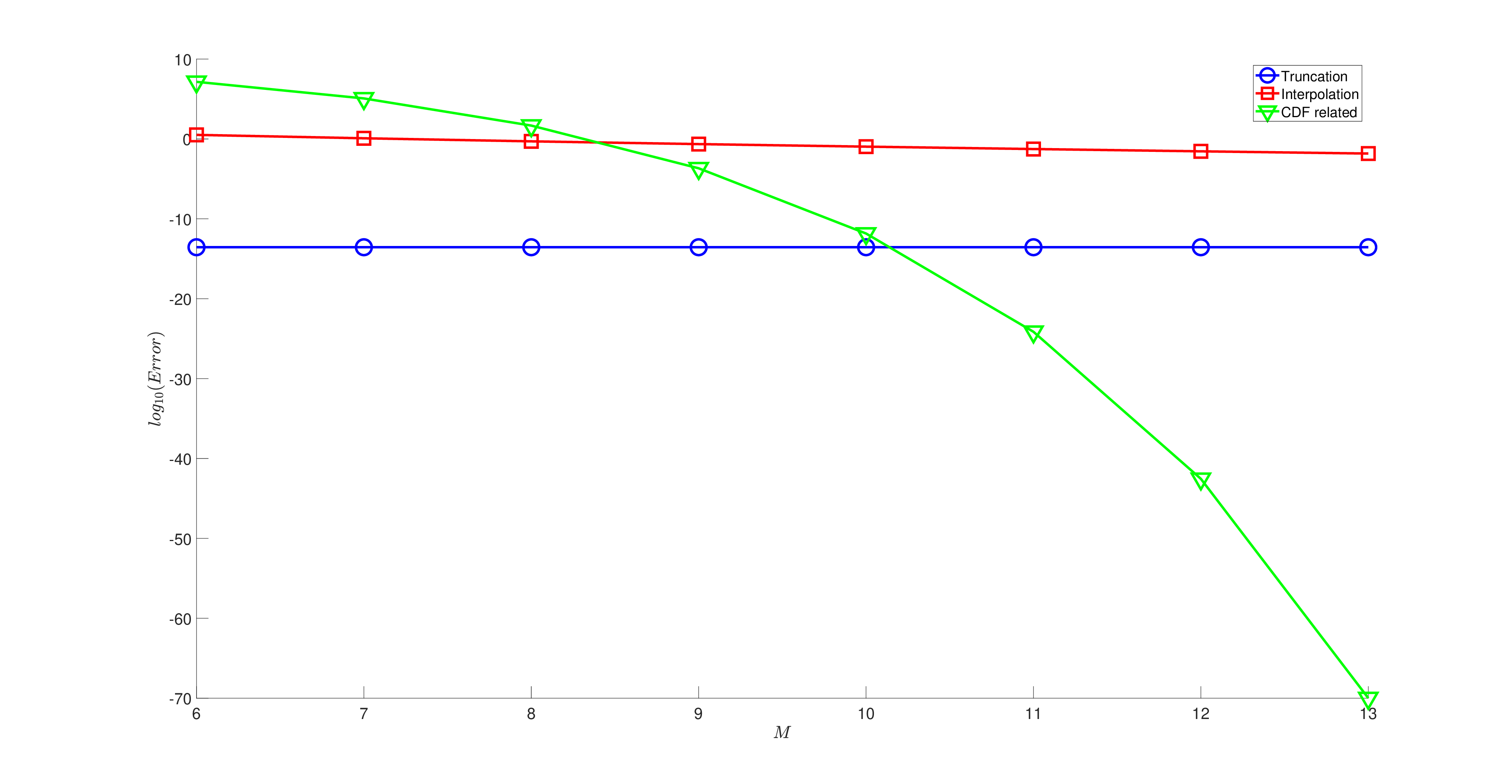} }
			\captionof{figure}{\small One-month European call option error bounds for an ATS ($\alpha=2/3$) simulated  with Lewis-FFT and linear interpolation. 
				We plot the bounds on the three error sources: 
				$i$) the truncation error \eqref{eq:truncation_CDF}  (blue circles), 
				$ii$) the error \eqref{eq:interp_CDF} due to the linear interpolation of the CDF (red squares) and 
				$iii$) the error  \eqref{eq:Error_FFT} related to numerical CDF (green triangles). Let us emphasize that the truncation error is always negligible wrt the linear interpolation error (at least $10$ orders of magnitude smaller for every $M$).
				Notice that, for $M>8$ the unique significant term is the linear interpolation error 
				(e.g. for $M=10$, it is at least $10$ orders of magnitude above all other errors). \label{figure:error_bound}} 
		\end{minipage}
	\end{center}
	%%Spline interpolation to tackle error
	As discussed in subsection \ref{subsection::error_sources}, to reduce the CDF interpolation error, 
	we consider the spline interpolation for the numerical inversion instead of the linear interpolation. 
	With spline interpolation ${\cal E}$ should scale as $\gamma^4$ instead of $\gamma^2$.
	In figure \ref{figure:1w_spline} and \ref{figure:1m_spline}, we plot the Lewis-FFT maximum error (MAX) for two different times to maturity: the error is for 30 European call options for different values of $M$ using spline (blue circles) and linear (red squares) interpolation. 
	%In figure \ref{figure:1w_spline}, we plot the maximum error for different values of $M$ using  Lewis-FFT-S (blue circles) and  Lewis-FFT-L (red squares). 
	We also plot SD, the average MC standard deviation, with a dashed green line. Notice that, for $M>6$ the spline interpolation error is significantly below the linear interpolation error. Spline interpolation's error improves significantly faster than the linear interpolation's error: for $M$ in the interval (6,10), 
	the maximum error scales as $\gamma^2$ for the linear interpolation and 
	as $\gamma^6$ for the spline interpolation.
	The observed behavior in the latter case -with an error that decreases much faster than  $\gamma^4$-
	is probably due to the monotonicity and boundness of the interpolated function (the CDF).

	\begin{center}
		
		\begin{minipage}[t]{1\textwidth}
			\centering
			{\includegraphics[width=1\textwidth]{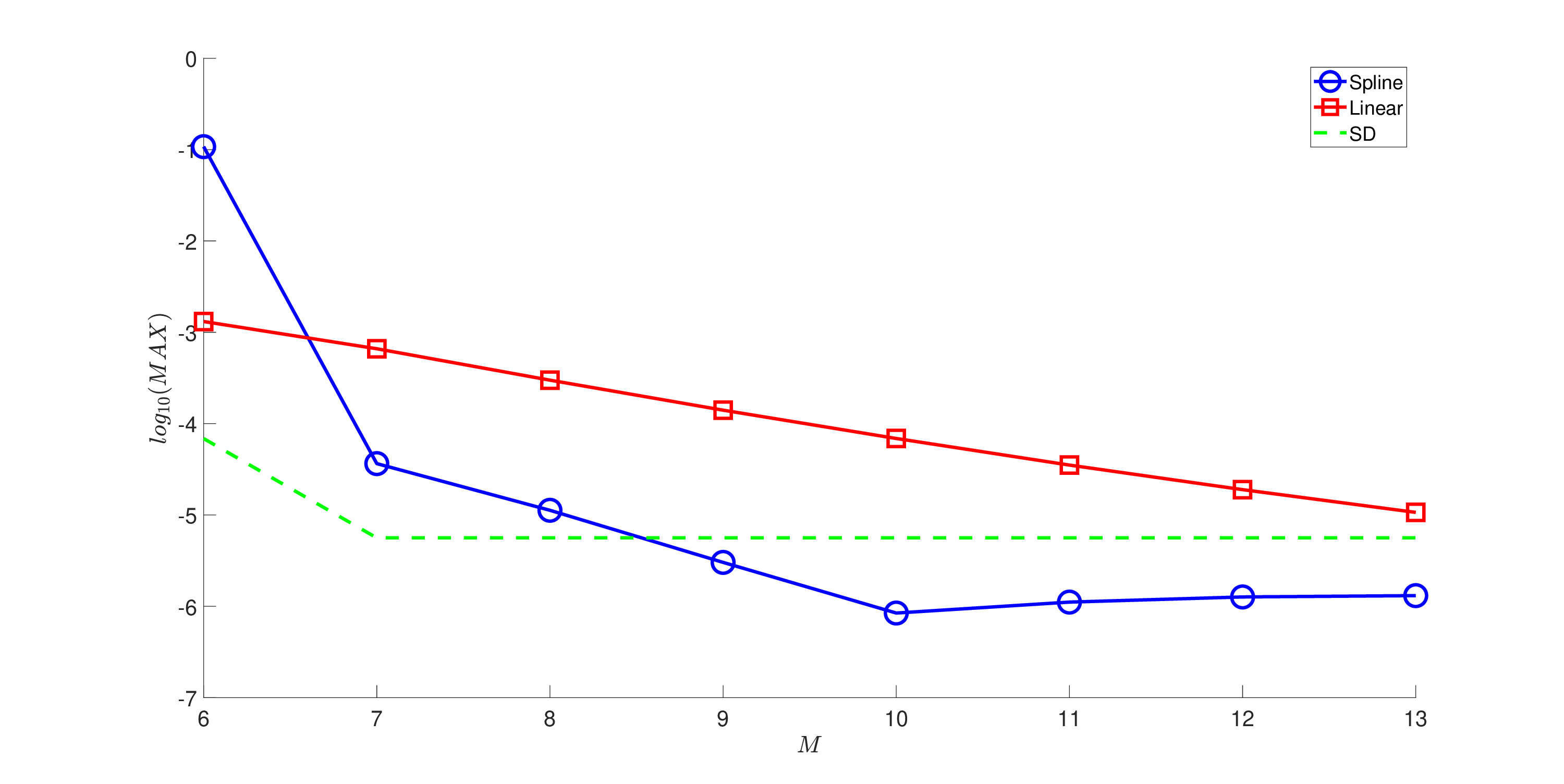} }
			\captionof{figure}{\small Maximum error for different values of $M$ using Lewis-FFT-S (blue circles) and Lewis-FFT with linear interpolation (red squares).  The maximum is computed over 30 call options (one-week maturity), with moneyness in the range $\sqrt{t}$(-0.2,0.2). We consider $10^7$ simulations and $\alpha=2/3$. Notice that, for $M>6$ the spline interpolation error is significantly below the linear interpolation error. Spline interpolation's error improves  significantly faster than the linear interpolation's error: for $M$ in the interval (6,10) the maximum error scales, on average, as $\gamma^6$ for the spline interpolation and as $\gamma^2$ for the linear interpolation.
				Moreover, the maximum error becomes significantly lower than the average MC standard deviation in a dashed green line.
				\label{figure:1w_spline}} 
		\end{minipage}
	\end{center}

	\begin{center}
		
		\begin{minipage}[t]{1\textwidth}
			\centering
			{\includegraphics[width=1\textwidth]{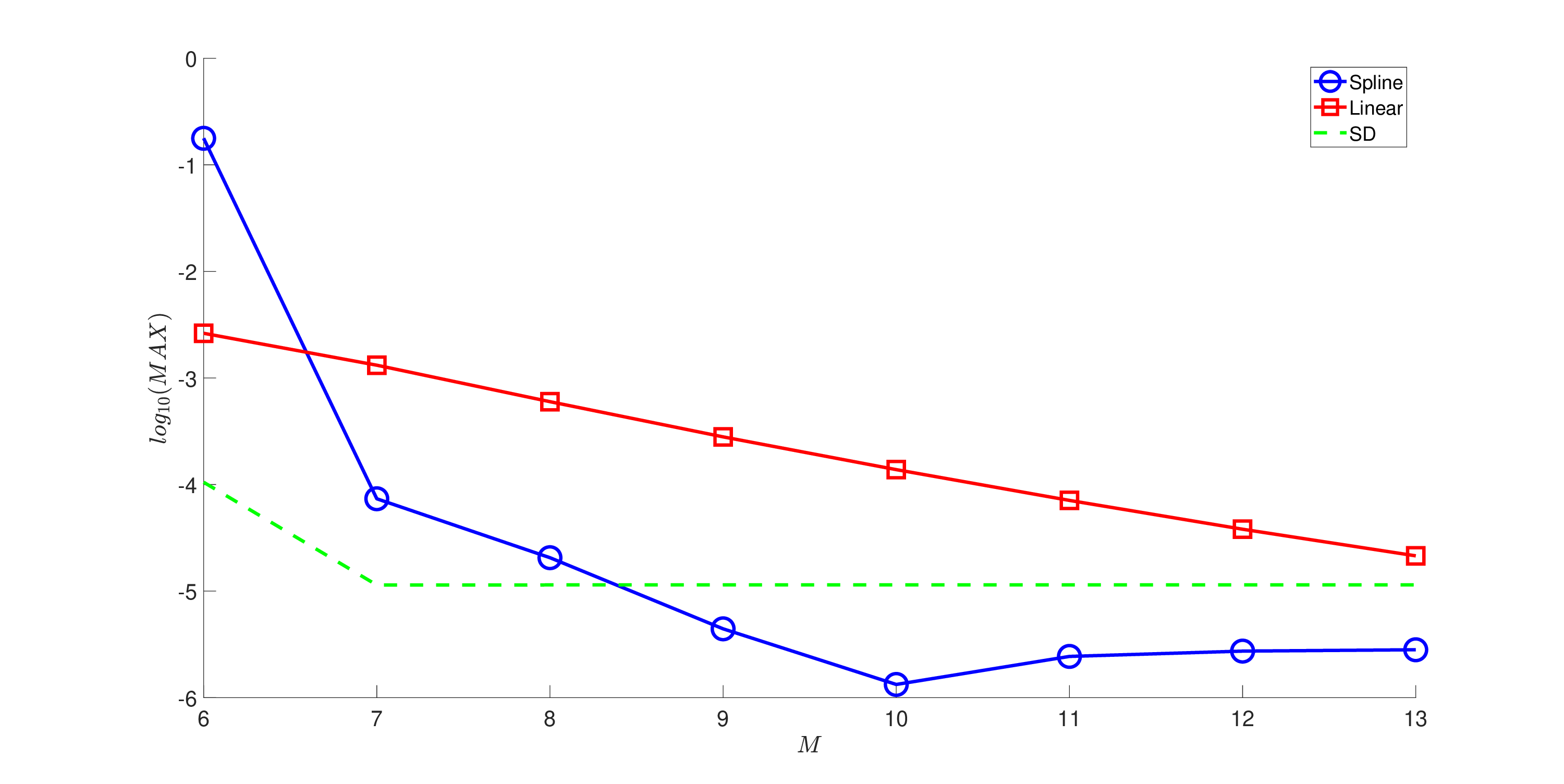} }
			\captionof{figure}{\small As figure \ref{figure:1w_spline} but for one-month maturity. Notice that, for $M>6$ the spline interpolation error is significantly below the linear interpolation error. Also in this case, spline interpolation's error improves  significantly faster than the linear interpolation's error: for $M$ in the interval (6,10) the maximum error scales, on average, as $\gamma^6$ for the spline interpolation and as $\gamma^2$ for the linear interpolation.\label{figure:1m_spline}} 
		\end{minipage}
	\end{center}

	We also desire to estimate the method's error with different metrics: besides MAX we consider the root mean squared error (RMSE) and the mean absolute percentage error (MAPE). 
	In table \ref{table::2_3_fix_N_FFT}, we report the performances of the Lewis-FFT-S algorithm  for $10^7$ simulations. We consider two values of $\alpha$ for the ATS:  $\alpha=1/3$  and $\alpha=2/3$.  The metrics are computed for 30 call options (one-month maturity) and moneyness in the range $\sqrt{t}$(-0.2,0.2). We observe that for $M> 9$ the error is 0.03 bp or below whatever metric we consider.\\
	
	\begin{table}
		\centering
		\begin{tabular}{|l|l|cccccccc|}		
			\hline
			&$\qquad \quad \;\; M$&6 &7 &8 &9 &10 &11 &12 &13\\	
			
			\hline
			
			$\alpha=1/3$	&MAX [bp.] &1639.69 &0.17 &0.02 &0.02 &0.03 &0.03 &0.03 &0.03\\
		&	RMSE [bp.] &1593.78 &0.10 &0.01 &0.01 &0.02 &0.02 &0.02 &0.02\\
		&	MAPE [\%] &1164.74 &0.07 &0.01 &0.01 &0.01 &0.01 &0.01 &0.01\\
		&	SD [bp.] &0.97 &0.12 &0.12 &0.12 &0.12 &0.12 &0.12 &0.12\\
			
			\hline	
			$\alpha=2/3$&MAX [bp.] &1774.61 &0.74 &0.21 &0.04 &0.01 &0.02 &0.03 &0.03\\
		&	RMSE [bp.] &1728.65 &0.49 &0.17 &0.03 &0.01 &0.01 &0.02 &0.02\\
		&	MAPE [\%] &1224.96 &0.34 &0.12 &0.02 &0.01 &0.01 &0.01 &0.01\\
		&	SD [bp.] &1.05 &0.11 &0.11 &0.11 &0.11 &0.11 &0.11 &0.11\\
			\hline
			
		\end{tabular}
		
		\caption{\small Lewis-FFT-S algorithm (with spline) performances wrt different metrics using $10^7$ trials for $\alpha=1/3$ and $\alpha=2/3$: MAX [bp], RMSE [bp], MAPE [\%], SD [bp]. The process is simulated for $M$ that goes from 7 to 13. The metrics are computed for 30 call options (one-month maturity), with moneyness in the range $\sqrt{t}$(-0.2,0.2). We observe that for all $M> 9$ the maximum error is 0.03 bp or below.  \label{table::2_3_fix_N_FFT}}
	\end{table}
	
	The main result of this subsection is that, in the Lewis-FFT-S framework, a Monte Carlo with $10^7$ simulations and $M>9$ provides 
	a very accurate pricing tool whatever time-horizon and $\alpha\in(0,1)$ we consider. 
	
	\subsection{European options: computational time}
	\label{subsection::Computational time}
	
	In this subsection, we emphasize that the proposed MC method is fast. We compare the Lewis-FFT-S computational cost both with the simplest possible dynamics for the underlying (geometric Brownian motion) and with the methodology that is often considered a benchmark for simulating jump processes (i.e. the simulation of jumps via the GA method). We prove that it is possible to speed up the simulation benchmark with the ziggurat method because the monotonicity conditions on $m_{s,t}^+$ and  $m_{s,t}^-$ in (\ref{eq:m})  hold for the ATS and also for the additive logistic process.\footnote{ \citet[][p.30]{EberleinMadan2009} point out that -for Sato processes- the L\'evy measure is
		decreasing in $x$ for positive $x$ and increasing in $x$ for negative $x$.}\\
	\begin{proposition}
		Consider $m_{s,t}^+$ and $m_{s,t}^-$ in (\ref{eq:m}). 	$m_{s,t}^+$ is non increasing in $x$ when $x>0$ and $m_{s,t}^-$  is non decreasing in $x$ when $x<0$ for \begin{enumerate}
			\item ATS processes with $\alpha\in(0,1)$;
			\item additive logistic processes.
		\end{enumerate}\label{pro:assumpt2}
		
	\end{proposition}
	\begin{proof}
	We have to demonstrate that $m^+_{s,t}(x)$ is non increasing in $x$ and $m^-_{s,t}(x)$ is non decreasing.
We prove the thesis by showing that the derivative of $\nu_t(x)$ wrt $x$ is 
negative and non increasing in $t$ for any $x>0$ and is positive and non decreasing in $t$ for any $x<0$.  
Notice that if this holds then 
\[
m^+_{s,t}(x)=\mathbb{I}_{x>\epsilon}\frac{\nu_t(x)-\nu_s(x)}{\int_\epsilon^\infty dz (\nu_t(z)-\nu_s(z))} 
\]
is non-increasing in $x$ and 
\[
m^-_{s,t}(x)=\mathbb{I}_{x<-\epsilon}\frac{\nu_t(x)-\nu_s(x)}{\int_{-\infty}^{-\epsilon} dz (\nu_t(z)-\nu_s(z))} 
\] 		is non decreasing in $x$.\\

We prove the thesis for the ATS.\\
Deriving $\nu_t(x)$ in (\ref{eq:nu_t}), we  get
\begin{align*}
	\frac{\partial \nu_t(x)}{\partial x}=-C_2\int_0^\infty dz \frac{e^{-z}z^{\alpha}g_3(t)\,e^{xg_2(t)}}{x^{2+\alpha}}
	\left(\alpha+\frac{z}{z/2+x\,g(t)}+1-x\,g_2(t)\right)\;\;,
\end{align*}
where $C_2$ is a positive constant.
The derivative of $\nu_t(x)$ is non increasing in $t$ for any $x>0$ because 
\begin{enumerate}
	\item $g_3(t)$ is positive and non decreasing in $t$ by condition 1 of theorem \ref{theorem:f_Additive};
	\item $e^{x\, g_2(t)}\left(\alpha +\frac{s}{s/2+x\,g(t)}\right)$ is the combination of two non decreasing function in $t$ for any $x>0$;
	\item $g_2(t)$ is negative and non decreasing and $(1-c \, x)e^{c\,x}$ is non decreasing for $c<0$.
\end{enumerate}
This proves the thesis for $x>0$.

\smallskip

The same holds true for $x<0$. \textit{Mutatis mutandis}, by substituting $g_2(t)$ with $g_1(t)$,  the proof is the same.\\

We prove the thesis for the logistic processes.\\ This entails showing that the derivative of the L\'evy measure for the CPDA model and the SLA model is non increasing for $x>0$ and non decreasing for $x<0$. Let us first consider the CPDA. Its  L\'evy measure can be rewritten as 
\begin{equation*} \displaystyle
	\nu_t(x)  = \left\lbrace\,
	\begin{array}{@{}r@{\quad}l@{}l@{}}
		\displaystyle	\frac{e^{-x/\sigma_t}}{x(1-e^{-x/\sigma_t})} &=:a_t\,g(y)   \quad \quad &x>0\\\displaystyle
		-\frac{e^{x/\sigma_t-x}}{x(1-e^{x/\sigma_t})}&=:e^{-y/a_t}a_t g(-y)   \quad \quad &x<0
	\end{array}
	\right.
\end{equation*}

where $a_t:=1/\sigma_t$, $y:=x/\sigma_t$ and $g(y):=e^{-y}/(y(1-e^{-y}))$.\\
We consider separately the positive and negative $x$. The derivative of $\nu_t(x)$ for $x>0$ is

\begin{equation*}
	\frac{\partial\nu_t(x)}{\partial x}=a_t^2\,g'(y)\;\;,
\end{equation*}
where the equality is because $\frac{\partial y}{\partial x}=a_t$.
The mixed derivative is 
\begin{align}
	\frac{\partial ^2 \nu_t(x)}{\partial x \partial t}&=a_ta_t'\left[2g'(y)+y\,g''(y)\right]=2a_ta_t'\frac{e^{y}(1+e^{y})}{(e^{y}-1)^3}<0\;\;, \label{eq:increasing_der}
\end{align}
where the first equality holds because $\frac{\partial y}{\partial t}=\frac{a'_t y}{a_t}$.\\
The derivative of $\nu_t(x)$ for $x<0$ is

\begin{align*}
	\frac{\partial \nu_t(x)}{\partial x}=-e^{-y/a_t} a_t\,g(-y)-e^{-y/a_t} a_t^2g'(-y)\;\;.
\end{align*}
We can compute the mixed derivative 
\begin{align}
	\frac{\partial^2 \nu_t(x)}{\partial x\partial t}&=
	-e^{-y/a_t} a_t'\left(g(-y)-yg'(-y)+a_t\left(2g'(-y)-yg''(-y)\right) \right)\nonumber\\ &\geq  -e^{-y/a_t}a_t'\left(g(-y)-yg'(-y)+2g'(-y)-yg''(-y)\right)=a_t'e^{-y/a_t}\frac{2e^{2y}}{(e^{y}-1)^3}>0\label{eq:decreasing_der}
\end{align}
where the  equality holds because $\frac{\partial y}{\partial t}=\frac{a'_t y}{a_t}$ and the inequalities because $a_t'<0$ and \[
2g'(y)-yg''(y)=-\frac{e^y(1+e^y)}{(e^y-1)^3}>0\;\;.
\]

Equations (\ref{eq:increasing_der}) and (\ref{eq:decreasing_der}) prove the thesis for the CPDA process.\\
The  L\'evy measure for the SLA process can be rewritten as 	\begin{equation*}
	\nu_t(x)  = \left\lbrace\,
	\begin{array}{@{}r@{\quad}l@{}l@{}}
		\displaystyle	\frac{e^{-x/\sigma_t}}{x(1-e^{-x/\sigma_t})} &=:a_t\,g(y)   \quad \quad &x>0\\\displaystyle
		-\frac{e^{x/\sigma_t}}{x(1-e^{x/\sigma_t})}&=:a_t g(-y)   \quad \quad &x<0\;\;,
	\end{array}
	\right.
\end{equation*}
Equation (\ref{eq:increasing_der}) proves the thesis for the SLA process if $x>0$. \textit{Mutatis mutandis} for the SLA process when $x<0$, we get that \[\frac{\partial ^2 v_t(x)}{\partial x \partial t}=-2a_ta_t'\frac{e^{y}(1+e^{y})}{(e^{y}-1)^3}>0\;\;.\]
Hence,  $m_{s,t}^+$ is non increasing in $x>0$ and $m_{s,t}^-$ is non decreasing in $x<0$ for both the CPDA and the SLA processes
	\end{proof}
	
	In table \ref{table::computational_cost}, we report the performances of the Lewis-FFT-S algorithm for $10^7$ simulations. 
	We consider the ATS with $\alpha=1/3$ and $\alpha=2/3$. For every choice of M, we register the computational time [s].  The metrics are computed for 30 call options (one-month maturity), with moneyness in the range $\sqrt{t}$(-0.2,0.2). We observe that for $M> 9$ the maximum error is 0.03 bp or below.\\
	
	\begin{table}
		\centering
		\begin{tabular}{|l|l|cccccccc|}
			
			\hline
			&$\qquad \;\,  M$&6 &7 &8 &9 &10 &11 &12 &13\\	
			\hline
			$\alpha=1/3$&Time [s] &0.23&0.23 &0.27 &0.28 &0.28 &0.28 &0.28 &0.29\\	
			$\alpha=2/3$&Time [s] &0.24&0.25 &0.27 &0.28 &0.28 &0.28 &0.28 &0.28\\ 
			
			\hline
		\end{tabular}
		
		\caption{\small Lewis-FFT-S computational time for simulating the ATS (with  $\alpha=1/3$ and $\alpha=2/3$) over a one-month time-interval using $10^7$ trials.  
			\label{table::computational_cost}}
	\end{table}
	
	%%Computational cost
	We point out, that Lewis-FFT-S is considerably efficient. In our machine\footnote{We use MATLAB 2021a on an AMD Ryzen 7 5800H, with 3.2 GHz.}, 
	sampling  $10^7$ trials of a geometric Brownian motion takes approximately 0.08 seconds: just one-third of the Lewis-FFT-S computational cost (reported in table \ref{table::computational_cost}). 
	
	\smallskip
	
	In figure \ref{figure:time_plot}, we plot the computational time wrt the time to maturity in log-log scale for $10^7$ simulations with Gaussian Approximation (blue squares) and  Lewis-FFT-S (red circles). Time to maturity goes from one day to two years.  To compare the two methods fairly, we need to select $M$ for the Lewis-FFT-S and $\epsilon$ for the Gaussian approximation s.t. the two methods provide similar errors. As above, for both methods, we price the 30 call options, with moneyness in the range $\sqrt{t}$(-0.2,0.2). For each time to maturity, we select $M$ and $\epsilon$ s.t. the maximum error (MAX) is between 1 bp and 0.1 bp, and s.t. the Lewis-FFT-S error is always below the GA error.  Lewis-FFT-S computational time appears constant in the time to maturity, while GA computational time improves as the time to maturity reduces. However, GA is always more computationally expensive than Lewis-FFT-S by at least $1.75$ orders of magnitude. This difference appears remarkable considering that we have verified that Lewis-FFT-S error is always below GA error.   
	\begin{center}
		
		\begin{minipage}[t]{1\textwidth}
			\centering
			{\includegraphics[width=1\textwidth]{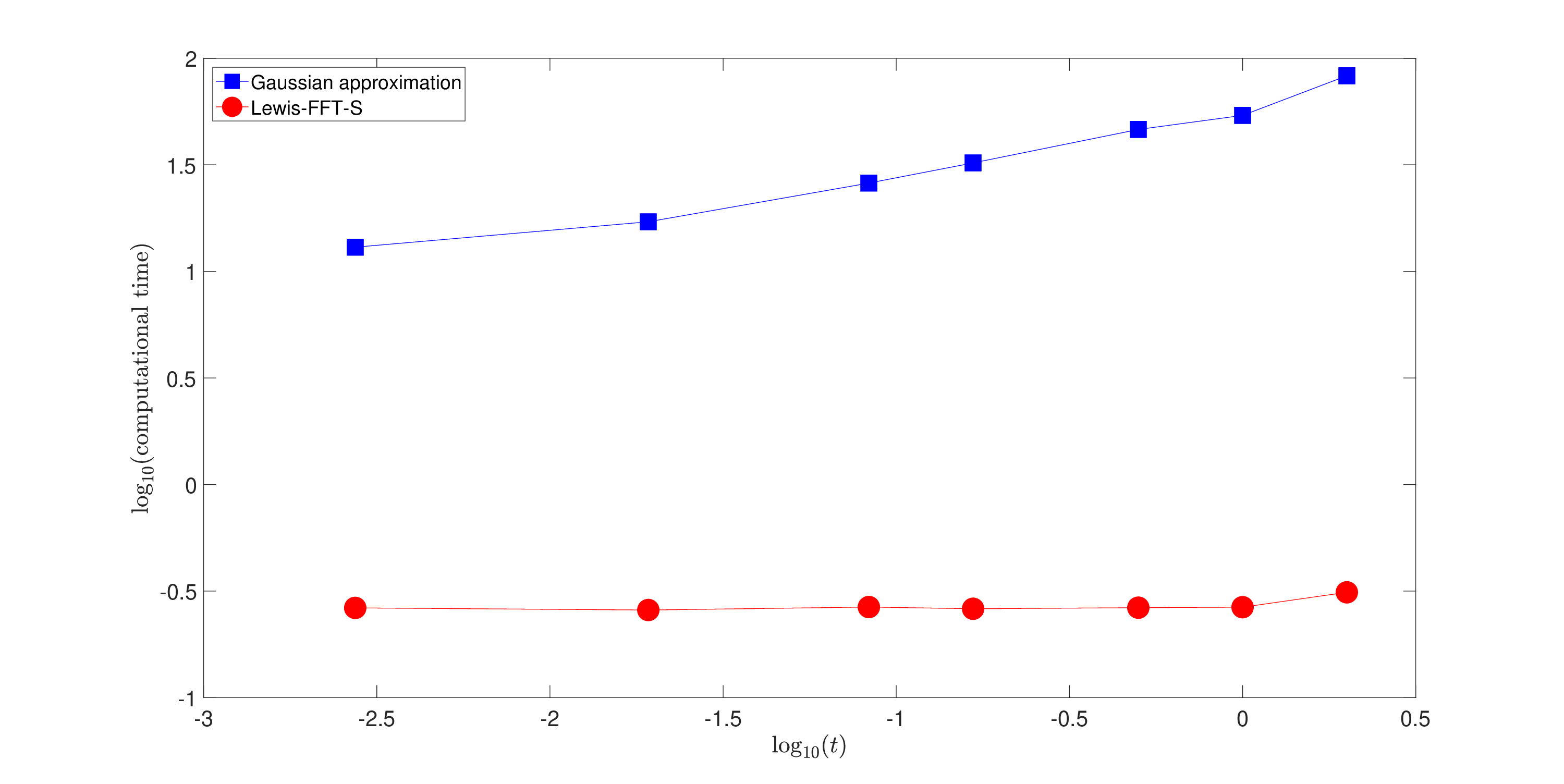} }
			\captionof{figure}{\small Computational time wrt the time to maturity in log-log scale for $10^7$ simulations with GA (blue squares) and  Lewis-FFT-S (red circles) techniques. 
				We price $30$ European call options, with moneyness in the range $\sqrt{t}  (-0.2,0.2)$ with GA and Lewis-FFT-S. 
				%To compare the two methods fairly we need to select $M$ for the Lewis-FFT-S and $\epsilon$ for the Gaussian approximation s.t. the two methods provide similar errors. 	
				We consider times to maturity, between one day and two years. For each $t$, 
				we select $M$ and the threshold $\epsilon$ s.t. 
				the maximum error is between 1 bp and 0.1 bp and we require that the Lewis-FFT-S error is always below the GA error. 
				The GA computational time improves as the time to maturity reduces because a lower number of jumps is involved, while the 
				Lewis-FFT-S simulation depends weakly on the time horizon considered. We observe that GA is always more computationally expensive than Lewis-FFT-S by at least $1.75$ orders of magnitude. 
				\label{figure:time_plot}} 
		\end{minipage}
	\end{center}
	\subsection{Discretely monitoring options}
	\label{subsection::Discretly monitoring}
	In this subsection,
	to give an idea of an application of the proposed MC,
	we price discretely-monitored (quarterly) Asian options, lookback options, and Down-and-In options with a time to maturity of five years. 
	
	\medskip

	Let us call  $L$  the Down-and-In barrier.
	The payoffs -Asian calls,  lookback puts, and Down-and-In puts-  we consider are  respectively
	\begin{align*}
	&\Big(S_0\sum_{i=0}^{n}e^{f_{t_i}} -\kappa\Big)^+\;\;,\\[3mm]
	&\Big( \kappa-\min_{i}S_0e^{f_{t_i}}\Big)^+\;\;\text{and}\\[3mm]
	&\Big(\kappa-S_0e^{f_{t_i}}\Big)^+\mathbb{I}_{\min_{i}S_0e^{f_{t_i}}\geq L}\;\;,
	\end{align*}
	where $n=20$, $0=t_0<t_1<...<t_i<...<t_n$  are the monitoring times, $f_{t_i}$ is the process at time $t_i$ for the logarithm of the underlying price,  the strike price $\kappa=S_0e^{-x}$ and $x$ is the moneyness.
	For example,  the process $\{ f_t \}_{t\ge0}$ can be modeled as a Brownian motion, in the simplest Black-Scholes case, or as an ATS process, as discussed in this paper.
	In both cases, we can simulate the paths of $\{ f_t \}_{t\ge 0}$ by simulating the increments $f_{t_i}-f_{t_{i-1}}$. Every increment of the ATS is simulated separately with the Lewis-FFT-S method. We point out that the procedure can be parallelized by leveraging on the independence of increments.
	
	\medskip

	%Results
	In table \ref{table::Esotic}, we report prices and MC standard deviation of Asian calls, lookback puts, and Down-and-In puts (with a barrier strike $L=0.6$). 
	We simulate $10^7$ paths of the ATS with $\alpha=2/3$ and price the discretely-monitored (quarterly) path-dependent options with time to maturity of five years. 
	We consider options with different moneyness in the range (-0.5,0.5), where $0.5\approx 0.2\sqrt{t}$ for $t=5$ years. We use $M=13$ for the numerical CDF inversion. 
	The method is very precise: the numerical error SD is of the order of one bp (or below) in all considered cases.
	
	As pointed out in the previous subsection, the Lewis-FFT-S is also extremely efficient when pricing discretely-monitored path-dependent exotics: 
	%the computational cost of simulating the ATS path is just three times the computational cost of simulating a standard  Brownian motion.
	with an ATS, it takes only three times the computational cost 
	%of pricing 
	that it takes with a standard  Brownian motion.
	\begin{table}
		\centering

		\begin{tabular}{|c|cc|cc|cc|}
			\hline
			Moneyness &Asian [\%] &SD  [\%] &Lookback [\%] &SD  [\%] &Down-and-In [\%] &SD  [\%]\\

			\hline
		-0.5 &39.79 &0.01 &3.31 &0.00 &2.31 &0.00\\
		-0.25 &24.36 &0.01 &8.72 &0.00 &3.98 &0.00\\
		0 &10.04 &0.01 &23.07 &0.01 &6.15 &0.01\\
		0.25 &2.57 &0.00 &50.53 &0.01 &8.95 &0.01\\
		0.5 &0.55 &0.00 &86.98 &0.01 &12.55 &0.01\\
			
			\hline
		\end{tabular}
		
		\caption{\small Prices and MC standard deviation of Asian calls, lookback puts, and Down-and-In puts for  moneyness in the range (-0.5,0.5). 
			We simulate $10^7$ paths of the ATS with $\alpha=2/3$ and price the discretely-monitored (quarterly) path-dependent options with time to maturity of five years. SD errors are always lower than 1 bp. \label{table::Esotic}}
	\end{table}

	\section{Conclusions}
	\label{sec:Conclusion}
	
	%New method for additive better than GA
	In this paper, we propose the Lewis-FFT-S method: a new Monte Carlo scheme for additive processes that leverages on the numerical efficiency of the FFT 
	applied to the Lewis formula for a CDF 
	and on the spline interpolation properties when inverting the CDF. We present an application to the additive normal tempered stable process, which has excellent calibration features on the equity volatility surface \citep[see e.g.,][]{azzone2019additive}.
	This simulation scheme is accurate and fast.\\
	
	%% Improve error
	We discuss in detail the accuracy of the method. 	
	In figure \ref{figure:error_bound}, we analyze the three-components of the bias error in (\ref{eq:Bias_error}). 
	In this study, we have shown how to accelerate convergence 
	by improving the two main sources of numerical error (\ref{eq:Bias_error}): 
	the  CDF error  (\ref{eq:Error_FFT}) and the interpolation error (\ref{eq:interp_CDF}). 
	First, we sharpen the  CDF error considering the Lewis formula  (\ref{eq:Lewis CDF}) for CDF and selecting the optimal shift that minimizes the error bound in the FFT. 
	Second, we substitute the linear interpolation with the spline interpolation. In this way, the leading term in the interpolation error improves from $\gamma^2$ to at least  $\gamma^4$.
	This improvement is particularly evident in figures \ref{figure:1w_spline}-\ref{figure:1m_spline}, where, for $M>6$, 
	the Lewis-FFT-S maximum error is significantly below the Lewis-FFT version of the method with linear interpolation and it appears to decrease as $\gamma^6$ in numerical experiments.\\
	
	%%Fast and accurate
	The Lewis-FFT-S is also fast. As discussed in subsection \ref{subsection::error_sources}, for a sufficiently large number of simulations, the increment in computational time due to spline interpolation is negligible.
	Moreover, as shown in figure \ref{figure:time_plot}, the proposed method is at least one and a half orders of magnitude faster than the traditional GA simulations, whatever time horizon we consider.
	Finally, we observe that, when pricing some  discretely-monitoring path-dependent options, the computational
	time is of the same order of magnitude as standard algorithms for Brownian motions.

	\section*{Acknowledgements}
	We are grateful for the valuable comments from  M. Fukasawa and C. Alasseur. We thank L. Ballotta, F. Baschetti, J. Blomvall, G. Bormetti, G. Consigli, G. Fusai, G. Germano, I. Kyriakou, O. Le Courtois, A. Pallavicini, R. Renò, and all participants to the conference on Intrinsic Time in Finance in Constance, to the QF workshop 2022 in Rome, to the ECSO-CMS conference 2022 in Venice, to the seminar at EDF R\&D in Paris and to the Bayes financial engineering workshops 2022 in London. 
	\newpage
	\bibliography{sources}
	\bibliographystyle{tandfx}
	\clearpage
	\section*{Notation and shorthands}
	
	\begin{center}

		\begin{tabular} {|c|l|}
			\toprule
			\textbf{Symbol}& \textbf{Description}\\ \bottomrule
			$\beta$& scaling parameter of ATS variance of jumps\\
			$\delta$ &scaling parameter of ATS skew parameter\\
			${\cal E}$ & total error when pricing the derivative with payoff $V(x)$ \\ 
			$\eta_t$& ATS skew parameter\\
			$\bar{\eta}$& ATS constant part of skew parameter\\
			$\epsilon$  & small jump threshold for GA\\
			${\cal E}^{CDF}_{h, M}$ & CDF error bound as a function of the grid size $h$ and of $M$ \\ 
			${\cal E}^{CDF}_{ M}$ & CDF error bound when $h$ is s.t. the two sources of CDF error are comparable \\ 
			$f_t$ & the additive process at time $t$\\
			$F_t(T)$ &forward price with maturity $T$ at time $t$\\
			$\phi_{s,t}$& characteristic function of additive increment between $s$ and $t$ time to maturity\\
			$\phi_{t}$& characteristic function of additive process at time $t$\\
			$\gamma$ & grid size in the CDF domain\\
			$h$ &  grid size in the Fourier domain \\
			$k_t$& ATS variance of jumps parameter\\
			$\bar{k}$& ATS constant part of the variance of jumps\\
			$K$& dimension of the CDF interpolation grid\\
			$\kappa$& strike price\\
			$L$ & Down-and-In barrier strike\\
			$m^+_{s,t}$ & probability density of positive jumps\\
			$m^-_{s,t}$&		probability density of negative jumps\\
			$M$ & integer number s.t. ${ N }$ is the number of grid points\\
			$n$ & number of monitoring times in path dependent derivatives\\
			$n_v$& number of points in which $V$ is not differentiable\\
			$N$ &  number of grid points  $(N={ 2^M })$ \\
			${\cal N}_{sim}$ & number of simulations \\ 
			$\nu_t$ &jump measure of additive process\\
			$P(x)$ & model CDF of the increment between the times $s$ and $t$ \\
			$\hat{P}(x)$ & numerical approximation of the CDF of the increment between the times $s$ and $t$ \\
			$p^-_t$ & upper bound of $\phi_{t}$ strip of regularity\\
			$p^+_t$ & $-(p^+_t+1)$ is the lower bound of $\phi_{t}$ strip of regularity\\ 
			$\bar{\sigma}$& ATS diffusion parameter\\
			$S_t$& spot price at time t\\
			$U$ & uniform r.v. in (0,1)\\
			$V(x)$ & derivative payoff\\
			%			$x$ & option log-moneyness\\
			$(x_0,x_K)$& interval in which the CDF is interpolated\\

			\bottomrule
			
		\end{tabular}
		
	\end{center}	
	
	\newpage
	{\bf Shorthands}
	
	\begin{center}
		\begin{tabular}{|c|l|}
			\toprule
			\textbf{Symbol}& \textbf{Description}\\ \bottomrule

			a.s. & almost surely\\
			ATS &additive normal tempered stable process\\
			bp& basis point\\
			CDF & cumulative distribution function \\
			%			DI & Down-and-In \\
			FFT & fast Fourier transform \\
			GA & Gaussian approximation technique \\
						MAE & mean absolute error\\
			MAPE & mean absolute percentage error (in MC prices)\\
			MAX & maximum error  (in MC prices)\\
			MC & Monte Carlo \\
			ms & milliseconds\\
			nn & nearest neighborhood algorithm \\
			r.v. & random variable \\
			RMSE & MC prices root mean squared errors \\
			SD & average standard deviation  (in MC prices)\\
			s.t & such that\\
			wrt & with respect to\\
			\bottomrule
		\end{tabular}
	\end{center}
	
	\begin{appendices}
		\section{The key features of the ATS process}

	\label{appendix_ATS}
In this appendix, we briefly recall the features of the ATS process that we use in the numerical experiments.		

As in \citet{azzone2019additive}, we model the forward at time $t$ with maturity $T$ as an exponential additive
\[F_t(T)=F_0(T)e^{f_t}\;\;,\]
where $f_t$ is the ATS process.

		At time $t$, the ATS characteristic function is
		\begin{equation}
		\label{laplace}
		\phi_t(u)=\E \, e^{i \; u \; f_t }={\cal L}_t \left(iu \left(\frac{1}{2}+\eta_t \right)\sigma_t^2+\frac{u^2\sigma^2_t}{2};\;k_t,\;\alpha \right)e^{-iu\log{\cal L}_t\left(\eta_t\sigma^2_t;\;k_t,\;\alpha\right)}\;\; .
		\end{equation}
		$ \sigma_t $, $k_t$  are continuous on $[0,\infty)$ and $ \eta_t $ is continuous  on $(0,\infty)$,
		with $ \sigma_t > 0$, $ k_t, \eta_t \geq  0$.  As in the corresponding  L\'evy case, we define   
		\[ 
		\ln {\cal L}_t \left(u;\;k,\;\alpha\right) :=
		\displaystyle \frac{t}{k}
		\displaystyle \frac{1-\alpha}{\alpha}
		\left \{1-		\left(1+\frac{u \; k}{1-\alpha}\right)^\alpha \right \}  \;\; ,
		%\qquad \alpha \in (0, 1)
		\]
		with $ \alpha \in (0,1) $.\footnote{
			We emphasize that we consider $\alpha >0$. 
			As discussed in subsection \ref{subsection::error_CDF}, this is the relevant situation in practice  when pricing  exotic derivatives:
			the case with $\alpha$ exactly equal to zero presents a power-law decay in the characteristic function. 
		} 
	
	As proven by \citet{azzone2019additive} in proposition 2.2, the forward process $F_t(T)$ is a martingale under the risk neutral measure.
		
		The ATS  jump measure  is
		\begin{equation}
		\nu_t(x)=\dfrac{tC\left(\alpha,k_t,{\sigma}_t,\eta_t\right)}{|x|^{1/2+\alpha}}e^{-(1/2+\eta_t)x}K_{\alpha+1/2}\left(|x|\sqrt{{\left(1/2+\eta_t\right)^2+2(1-\alpha)/( k_t\,{\sigma}^2_t)}}\right)	
		\;\;, \label{eq:nu_t}
		\end{equation}
		with \[C\left(\alpha,k_t,{\sigma}_t,\eta_t\right):=\frac{2}{\Gamma(1-\alpha) \sqrt{2 \pi}}\left(\frac{1-\alpha}{k_t}\right)^{1-\alpha}{\sigma}^{2\alpha}_t\left(\left(1/2+\eta_t\right)^2+2(1-\alpha)/(k_t\, {\sigma}^2_t)\right)^{\alpha/2+1/4}\;\;,\]
		and	$ K_\nu(x)$ the modified Bessel function of the second kind  \citep[see e.g.,][ch.9, p.376]{abramowitz1948handbook}
		\[ 
		K_\nu(x):=\frac{e^{-x}}{\Gamma\left(\nu+\frac{1}{2}\right)}\sqrt{\frac{\pi}{2 \, x}}\int_0^\infty dz e^{-z}z^{\nu-1/2}\left(1+\frac{z}{2 \, x}\right)^{\nu-1/2}\;\;.
		\]

		Moreover,	we recall that a	sufficient condition for the existence of ATS is provided in the following theorem \citep[cf.][th.2.1, p.503]{azzone2019additive}.
		\begin{theorem} {\bf Sufficient conditions for existence of ATS} \label{theorem:f_Additive}\\ 
			There exists an additive process $ \left\{f_t \right\}_{t\geq 0}$ with the characteristic function  (\ref{laplace}) if the following two conditions hold.
			\begin{enumerate}[(a)]
				\item $g_1(t)$, $g_2(t)$, and $g_3(t)$ are non decreasing, where  
				\begin{align}
				g_1(t)&:=(1/2+\eta_t)-\sqrt{\left(1/2+\eta_t\right)^2+ 2(1-\alpha)/(\sigma_t^2 k_t)} \nonumber\\
				g_2(t)&:=-(1/2+\eta_t)-\sqrt{\left(1/2+\eta_t\right)^2+ 2(1-\alpha)/( \sigma_t^2 k_t)}\label{eq:g_2}\\
				g_3(t)&:=\frac{ t^{1/\alpha}\sigma^2_t}{k_t^{(1-\alpha)/\alpha}}\sqrt{\left(1/2+\eta_t\right)^2+ 2(1-\alpha)/( \sigma_t^2k_t)}\;\;; \nonumber
				\end{align}
				\item Both  $t\,\sigma_t^2\,\eta_t$ and $t\,\sigma_t^{2\alpha}\,\eta_t^\alpha\,/k_t^{1-\alpha}$ go to zero as $t$ goes to zero  \qed
			\end{enumerate} 
			
		\end{theorem}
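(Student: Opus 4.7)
The plan is to invoke Sato's characterization of additive processes via systems of generating triplets (Theorem 9.8 of \citet{Sato1999}). That theorem produces an additive process with $f_0=0$ a.s.\ and the prescribed characteristic function $\phi_t$ as soon as the associated L\'evy--Khintchine triplet $(b_t,A_t,\nu_t)$ satisfies $(b_0,A_0,\nu_0)=(0,0,0)$, $b_t$ is continuous in $t$, $A_t$ is continuous and non-decreasing in $t$, and $t\mapsto\nu_t(B)$ is continuous and non-decreasing for every Borel set $B$ bounded away from the origin. My strategy is to read the triplet off \eqref{laplace} and translate the two hypotheses of the theorem into these conditions.

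First I would identify the triplet. The exponential factor $e^{-iu\log\mathcal{L}_t(\eta_t\sigma_t^2;k_t,\alpha)}$ is a pure deterministic drift, so $A_t\equiv 0$ (the Gaussian clause is vacuous), while the L\'evy density of the jump part is given in closed form by \eqref{eq:nu_t}. Inserting the integral representation of $K_{\alpha+1/2}$ recalled just after \eqref{eq:nu_t} and combining $e^{-(1/2+\eta_t)x}$ with the $e^{-|x|\sqrt{y_0(t)}}$ produced by the Bessel function, routine algebra yields
\begin{equation*}
\nu_t(x)\,=\,\kappa_\alpha\,\frac{g_3(t)^\alpha}{|x|^{1+\alpha}}\,\exp(x\,g_2(t)\,\mathbb{I}_{x>0}+|x|\,g_1(t)\,\mathbb{I}_{x<0})\int_0^\infty e^{-z}z^\alpha\left(1+\frac{z}{2|x|\sqrt{y_0(t)}}\right)^{\alpha}dz,
\end{equation*}
where $y_0(t):=(1/2+\eta_t)^2+2(1-\alpha)/(k_t\sigma_t^2)$ and $\kappa_\alpha=(1-\alpha)^{1-\alpha}/[\Gamma(\alpha+1)\Gamma(1-\alpha)]$ is an absolute constant.

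From this factorization, the non-decreasingness of $\nu_t$ reduces neatly to the first hypothesis of the theorem. The prefactor $g_3(t)^\alpha$ is non-decreasing because $g_3$ is and $\alpha>0$; the exponential in $x$ is non-decreasing in $t$ for every fixed $x\neq 0$ because $g_2$ is non-decreasing (handling $x>0$) and $g_1$ is non-decreasing (handling $x<0$); and the Bessel-type integral is non-decreasing because $\sqrt{y_0(t)}=-(g_1(t)+g_2(t))/2$ is non-increasing (sum of two non-decreasing functions, negated), so $z/(2|x|\sqrt{y_0(t)})$ is non-decreasing and the map $u\mapsto(1+u)^\alpha$ preserves ordering for $\alpha\in(0,1)$. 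Continuity in $t$ of $\nu_t(B)$ and of the drift $b_t$ (which collects the contribution of the exterior exponential in \eqref{laplace} together with the L\'evy compensator) is then inherited from the continuity of $\sigma_t,k_t,\eta_t$ already imposed on the parameterization.

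The main obstacle I anticipate is verifying the initial condition at $t=0$, namely $(b_0,A_0,\nu_0)=(0,0,0)$, which underlies $f_0=0$ a.s. Expanding the cumulant $\log\mathcal{L}_t(\eta_t\sigma_t^2;k_t,\alpha)$ and $g_3(t)^\alpha$ around $t=0$, the leading orders that must vanish are respectively $t\sigma_t^2\eta_t$ (governing the drift, via the standard $(1+v)^\alpha\approx 1+\alpha v$ expansion of the tempered-stable Laplace exponent) and $t\sigma_t^{2\alpha}\eta_t^\alpha/k_t^{1-\alpha}$ (dominating $g_3(t)^\alpha$ and hence the total mass of $\nu_t$ in the regime where $\eta_t$ drives $y_0$). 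These are exactly the two quantities the second hypothesis forces to zero, so a dominated-convergence argument inside the Bessel integrand lets the limit pass through and yields $\phi_t(u)\to 1$ as $t\to 0^+$, completing the verification of the hypotheses of Sato's Theorem 9.8.
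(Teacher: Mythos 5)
You should first note that the paper itself contains no proof of this theorem: it is recalled from \citet[th.2.1]{azzone2019additive}, so there is no in-paper argument to compare against. That said, your strategy is the intended one: invoke Sato's Theorem 9.8 on systems of generating triplets, read the triplet $(b_t,0,\nu_t)$ off \eqref{laplace}, and reduce monotonicity of $t\mapsto\nu_t(B)$ to condition 1. Your factorization of $\nu_t$ checks out against \eqref{eq:nu_t} and the integral representation of $K_{\alpha+1/2}$: the $|x|^{-1-\alpha}$ power, the identification of the exponential tilts with $g_2$ for $x>0$ and $g_1$ for $x<0$, the prefactor $\kappa_\alpha\,g_3(t)^\alpha$, and the monotonicity of the residual integral via $\sqrt{y_0(t)}=-(g_1(t)+g_2(t))/2$ being non-increasing are all correct, and this is exactly why $g_1,g_2,g_3$ are the natural monotone quantities.

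The genuine gap is the $t\to0^+$ step, which you flag as the main obstacle but do not close. Condition 2 gives you control of $t\,\sigma_t^2\,\eta_t$ and $t\,\sigma_t^{2\alpha}\,\eta_t^\alpha/k_t^{1-\alpha}$ only, whereas $g_3(t)^\alpha=t\,\sigma_t^{2\alpha}\,y_0(t)^{\alpha/2}/k_t^{1-\alpha}$ with $y_0(t)=(1/2+\eta_t)^2+2(1-\alpha)/(k_t\sigma_t^2)$ also produces terms of order $t\,\sigma_t^{2\alpha}/k_t^{1-\alpha}$ and $t\,\sigma_t^{\alpha}/k_t^{1-\alpha/2}$, which the two displayed quantities do not obviously dominate; your appeal to ``the regime where $\eta_t$ drives $y_0$'' skips precisely the case where the $2(1-\alpha)/(k_t\sigma_t^2)$ term dominates. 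Similarly, $b_t$ is not only the explicit drift $-\log\mathcal{L}_t(\eta_t\sigma_t^2;k_t,\alpha)$ but also contains the compensator of the truncated jumps, so its vanishing at $t=0$ needs the same control of $\nu_t$. To complete the argument you must either show that these extra terms vanish as a consequence of conditions 1 and 2 taken together (the monotonicity of $g_1,g_2,g_3$ near $t=0$ is what rules out the bad regimes), or make explicit the regularity of $\sigma_t,k_t,\eta_t$ at $t=0$ that is being used; a bare dominated-convergence claim inside the Bessel integrand does not do this. Continuity of $t\mapsto\nu_t(B)$ and of $b_t$, required by Sato's theorem, should also be stated explicitly, though it follows from the assumed continuity of the parameter functions.
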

	
	We point out that the boundaries of the strip of regularity of the characteristic function of the ATS $p_t^++1$ and $p_t^-$ are equivalent to $g_1(t)$ and $g_2(t)$, as shown in the proof of proposition \ref{proposition::assumptions}.

		%We prove the thesis for Sato processes\\
		%We recall that, for a Sato process, the characteristic function at time $t$ is %$\phi_t(u)=\phi_1(u\,t^\zeta)$,  where $\zeta>0$ is the scaling constant \citep[see %e.g.,][p.6]{EberleinMadan2009}.\\
		
		%First, we prove that the Assumption holds.\\
		%This entails showing that $p^+_t$  and $p^-_t$ are non increasing for the process. At time $t=1$, the %characteristic function $\phi_1(u)$   is analytic on the imaginary axis $u=-i\,a$,  $a\in \mathbb{R}$ %iff \[
		%-p_1^-<a<p_1^++1\;\;,
		%\]
		%where $p_1^-$ and $p_1^++1$ depend on the process specific characteristic function 	%\citep[cf.][th.3.1, p.12]{Lukacs1972}. Since $\phi_t(u)=\phi_1(u\,t^\zeta)$, $\phi_t(u)$ is analytic %on the imaginary axis iff \[
		%\frac{-p_1^-}{t^\zeta}<a<\frac{p_1^++1}{t^\zeta}\;\;.
		%\]
		%Hence, for a Sato process, $p_t^+=\frac{p_1^++1}{t^\zeta}-1$ and $p_t^-=\frac{p_1^-}{t^\zeta}$, which %are non increasing in $t$.\\

		\vspace{2cm}		
		\
		\newpage
		\section{Simulation algorithm} \label{appendix:simulation_algo}
		A brief description of the Lewis-FFT algorithm follows\\

		\begin{algorithm}
			
			\begin{algorithmic}
				
				\Procedure{Lewis-FFT}{$M,{\cal N}_{sim}$,$FlagSpline$}       %\Comment{This is a test}
				\State
				\State COMPUTE $h(M)$, $N$, $\gamma$
				\State COMPUTE $ \hat{P}$ with FFT						\Comment{$z_0,\;z_{N-1}$ fixed by FFT}
				\State
				\State FIX $x_K$ nearest point to $5\sqrt{t-s}$ and $x_0=-x_K$
				\State $\vv{x}=x_0:\gamma:x_K$  \Comment{Grid dimension: $K+1$}
				\State
				\State SAMPLE a vector ${U}$ of ${\cal N}_{sim}$ uniform r.v. in [0,1]
				\State ${J}$ = NearestNeighborhood(${U}$, $\hat{P}(\vv{x})$)   \Comment{Find next element in the grid}
				\State
				\If{$FlagSpline = True$}
				\State COMPUTE spline interpolation coefficients $\{c_{q,J}^S\}_{q=0}^3$  \Comment{Solve tridiagonal linear sistem}
				\State ${\rm X}={\rm spline}(\hat{P}(\vv{x})$, $\vv{x}$, $U$,$J$) 		\Comment{Interpolate on $U$}
				\Else
				\State  COMPUTE linear interpolation coefficients $\{c_{q,J}^L\}_{q=0}^1$
				\State ${\rm X} =c_{0,J}^L +{U}\,c_{1,J}^L$					\Comment{Interpolate on $U$}
				
				\EndIf
				\EndProcedure
				
			\end{algorithmic}
		\end{algorithm}
		
		\section{European options errors: Spline vs Linear}
		In this appendix, we report the error between simulated and exact option prices strike-by-strike.
		In table \ref{table::EuropeanOptionPrices}, we report the prices of 30 European options with exact method, Lewis-FFT-S MC and Lewis-FFT MC with linear interpolation (in percentage) for the 1-month maturity, $\alpha=2/3$, and $M=10$. Option prices are in percentage of the spot price. Errors of the Lewis-FFT-S are of the order of 0.01 bp. Morover, errors with spline interpolation are, on average, two orders of magnitude below errors with linear interpolation.
			\begin{table}
			\centering
	\resizebox{\textwidth}{!}{	\begin{tabular}{|ccc|ccc|ccc|}
		\hline
Strike [\%] &x [\%]&Exact [\%]&Lewis-FFT-S [\%]&Error [bp]& Rel. Error [\%]&Lewis-FFT-Lin [\%] &Error [bp]& Rel. Error [\%]\\
\hline
105.94 &-5.77 &0.42 &0.42 &-0.01 &-0.03 &0.43 &0.54 &1.28\\
105.52 &-5.38 &0.48 &0.48 &-0.01 &-0.02 &0.48 &0.65 &1.37\\
105.10 &-4.98 &0.54 &0.54 &-0.00 &-0.01 &0.55 &0.70 &1.30\\
104.69 &-4.58 &0.61 &0.61 &0.00 &0.00 &0.62 &0.77 &1.27\\
104.27 &-4.18 &0.69 &0.69 &0.01 &0.01 &0.70 &0.90 &1.31\\
103.86 &-3.78 &0.77 &0.77 &0.01 &0.01 &0.78 &1.00 &1.29\\
103.44 &-3.38 &0.87 &0.87 &0.01 &0.01 &0.88 &1.03 &1.18\\
103.03 &-2.99 &0.98 &0.98 &0.01 &0.01 &0.99 &1.11 &1.13\\
102.62 &-2.59 &1.10 &1.10 &0.01 &0.01 &1.11 &1.21 &1.10\\
102.21 &-2.19 &1.22 &1.23 &0.01 &0.01 &1.24 &1.26 &1.03\\
101.81 &-1.79 &1.37 &1.37 &0.01 &0.01 &1.38 &1.29 &0.94\\
101.40 &-1.39 &1.52 &1.52 &0.01 &0.01 &1.53 &1.33 &0.88\\
101.00 &-1.00 &1.69 &1.69 &0.01 &0.01 &1.70 &1.37 &0.81\\
100.60 &-0.60 &1.87 &1.87 &0.01 &0.01 &1.88 &1.38 &0.74\\
100.20 &-0.20 &2.06 &2.06 &0.01 &0.01 &2.07 &1.37 &0.66\\
99.80 &0.20 &2.26 &2.26 &0.01 &0.00 &2.27 &1.34 &0.59\\
99.40 &0.60 &2.48 &2.48 &0.01 &0.00 &2.49 &1.33 &0.54\\
99.01 &1.00 &2.71 &2.71 &0.01 &0.00 &2.72 &1.31 &0.48\\
98.62 &1.39 &2.95 &2.95 &0.01 &0.00 &2.96 &1.24 &0.42\\
98.22 &1.79 &3.20 &3.20 &0.00 &0.00 &3.21 &1.16 &0.36\\
97.83 &2.19 &3.46 &3.46 &-0.00 &-0.00 &3.47 &1.12 &0.32\\
97.45 &2.59 &3.73 &3.73 &-0.01 &-0.00 &3.74 &1.06 &0.28\\
97.06 &2.99 &4.01 &4.01 &-0.01 &-0.00 &4.02 &0.95 &0.24\\
96.67 &3.38 &4.29 &4.29 &-0.01 &-0.00 &4.30 &0.87 &0.20\\
96.29 &3.78 &4.59 &4.59 &-0.01 &-0.00 &4.60 &0.85 &0.19\\
95.91 &4.18 &4.89 &4.89 &-0.00 &-0.00 &4.90 &0.77 &0.16\\
95.52 &4.58 &5.20 &5.20 &-0.00 &-0.00 &5.21 &0.68 &0.13\\
95.14 &4.98 &5.51 &5.51 &0.00 &0.00 &5.52 &0.63 &0.11\\
94.77 &5.38 &5.83 &5.83 &0.00 &0.00 &5.84 &0.60 &0.10\\
94.39 &5.77 &6.15 &6.15 &0.00 &0.00 &6.16 &0.52 &0.09\\
		\hline
	\end{tabular}}

			\caption{\small Prices of 30 European options with exact method, Lewis-FFT-S MC and Lewis-FFT MC with linear interpolation (in percentage) for the 1-month maturity, $\alpha=2/3$, and $M=10$. Option prices are in percentage of the spot price. For every option we compute the error (exact price - MC price) in bp and the relative error  (exact price - MC price) divided by the exact price in percentage.  Errors of the Lewis-FFT-S are of the order of 0.01 bp. Morover, errors with spline interpolation are, on average, two orders of magnitude below errors with linear interpolation. \label{table::EuropeanOptionPrices}}
		\end{table}
	\label{appendix:table}
		
	\end{appendices}%\clearpage
\end{document}